\documentclass[journal]{IEEEtran}
%
% If IEEEtran.cls has not been installed into the LaTeX system files,
% manually specify the path to it like:
% \documentclass[journal]{../sty/IEEEtran}

\usepackage[pdftex,breaklinks,colorlinks,
  citecolor=blue,
  urlcolor=blue]{hyperref}

\usepackage{cite}
\ifCLASSINFOpdf
   \usepackage[pdftex]{graphicx}
\else
  \usepackage{graphicx}
\fi
\graphicspath{{figures/}}

\usepackage{amsmath}
\usepackage{empheq}
\interdisplaylinepenalty=2500

\usepackage{bbm}
\usepackage{amssymb}
\usepackage{mathrsfs}  
\usepackage{amsthm}
\usepackage{bm}

\usepackage{array}
% IEEEtran contains the IEEEeqnarray family of commands that can be used to
% generate multiline equations as well as matrices, tables, etc., of high
% quality.

\usepackage{url}

% *** Do not adjust lengths that control margins, column widths, etc. ***
% *** Do not use packages that alter fonts (such as pslatex).         ***

% correct bad hyphenation here
%\hyphenation{op-tical net-works semi-conduc-tor}

% Environments
% ------------
\theoremstyle{plain}
\newtheorem{theorem}{Theorem}[section]
\newtheorem{proposition}[theorem]{Proposition}
\newtheorem{lemma}[theorem]{Lemma}
\newtheorem{corollary}[theorem]{Corollary}

\theoremstyle{definition}

\theoremstyle{remark}
\newtheorem{remark}[theorem]{Remark}

% Shortcuts
% ---------
\newcommand{\E}{\mathbb{E}}
\newcommand{\R}{\mathbb{R}}

\newcommand{\Z}{\mathbb{Z}}
\newcommand{\N}{\mathbb{N}}
\newcommand{\M}{\mathbb{M}}

\newcommand{\Pol}{\mathrm{Pol}}
\newcommand{\Trig}{\mathrm{Trig}}

\renewcommand{\P}{\mathbb{P}}
\newcommand{\Q}{\mathbb{Q}}

\newcommand{\Fc}{\mathcal{F}}
\newcommand{\A}{\mathcal{A}}
\newcommand{\TSASFunc}{\mathcal{L}}
\newcommand{\Hc}{\mathcal{H}}
\newcommand{\activF}{\sigma}

\renewcommand{\d}{\mathrm{d}}

\begin{document}

\title{Reservoir Computing Universality With Stochastic Inputs}

\author{Lukas~Gonon and
        Juan-Pablo~Ortega% <-this % stops a space
\thanks{L. Gonon and J.-P. Ortega are with the Department
of Mathematics and Statistics, Universit\"at Sankt Gallen, Sankt Gallen, Switzerland. L. Gonon is also affiliated with the Department of Mathematics, ETH Z\"urich, Switzerland. J.-P. Ortega is also affiliated with the Centre National de la Recherche Scientifique (CNRS), France.}% <-this % stops a space
}

%\author{Lukas~Gonon  and
%        Juan-Pablo~Ortega }% <-this % stops a space
%\thanks{L. Gonon and J.-P. Ortega are with the Department
%of Mathematics and Statistics, Universit\"at Sankt Gallen, Sankt Gallen, Switzerland. J.-P. Ortega is also affiliated with the Centre National de la Recherche Scientifique (CNRS), France.}% <-this % stops a space
%%\thanks{This work was supported by the Swiss National Science Foundation ... }% <-this % stops a space
%%\thanks{Manuscript received July 9, 2018.}}

% The paper headers
\markboth{Gonon and Ortega: Reservoir Computing Universality With Stochastic Inputs}%
{Gonon and Ortega: Reservoir Computing Universality With Stochastic Inputs}
% The only time the second header will appear is for the odd numbered pages
% after the title page when using the twoside option.
% 
% *** Note that you probably will NOT want to include the author's ***
% *** name in the headers of peer review papers.                   ***
% You can use \ifCLASSOPTIONpeerreview for conditional compilation here if
% you desire.

% If you want to put a publisher's ID mark on the page you can do it like
% this:
%\IEEEpubid{0000--0000/00\$00.00~\copyright~2015 IEEE}
% Remember, if you use this you must call \IEEEpubidadjcol in the second
% column for its text to clear the IEEEpubid mark.

\maketitle

% As a general rule, do not put math, special symbols or citations
% in the abstract or keywords.
\begin{abstract}
The universal approximation properties with respect to $L ^p $-type criteria of three important families of reservoir computers with stochastic discrete-time semi-infinite inputs is shown. First, it is proved that linear reservoir systems with either polynomial or neural network readout maps are universal. More importantly, it is proved that the same property holds for two families with linear readouts, namely, trigonometric state-affine systems and echo state networks, which are the most widely used reservoir systems in applications. The linearity in the readouts is  a key feature  in supervised machine learning applications. It guarantees that these systems can be used in high-dimensional situations and in the presence of large datasets. The $L ^p $ criteria used in this paper allow the formulation of universality results that do not necessarily  impose almost sure uniform boundedness in the inputs or the fading memory property in the filter that needs to be approximated.
\end{abstract}

% Note that keywords are not normally used for peerreview papers.
\begin{IEEEkeywords}
Reservoir computing, echo state network, ESN,  machine learning,  uniform system approximation, stochastic input, universality.
\end{IEEEkeywords}

% For peer review papers, you can put extra information on the cover
% page as needed:
% \ifCLASSOPTIONpeerreview
% \begin{center} \bfseries EDICS Category: 3-BBND \end{center}
% \fi
%
% For peerreview papers, this IEEEtran command inserts a page break and
% creates the second title. It will be ignored for other modes.
\IEEEpeerreviewmaketitle

\section{Introduction}
% The very first letter is a 2 line initial drop letter followed
% by the rest of the first word in caps.
% 
% form to use if the first word consists of a single letter:
% \IEEEPARstart{A}{demo} file is ....
% 
% form to use if you need the single drop letter followed by
% normal text (unknown if ever used by the IEEE):
% \IEEEPARstart{A}{}demo file is ....
% 
% Some journals put the first two words in caps:
% \IEEEPARstart{T}{his demo} file is ....
% 
% Here we have the typical use of a "T" for an initial drop letter
% and "HIS" in caps to complete the first word.
\IEEEPARstart{A}{universality} statement in relation to a machine learning paradigm refers to its versatility at the time of reproducing a rich number of patterns obtained by modifying only a limited number of hyperparameters. In the language of learning theory, universality amounts to the possibility of making approximation errors as small as one wants \cite{cucker:smale, Smale2003, cucker:zhou:book}. Well-known universality results are, for example, the uniform approximation properties of feedforward neural networks established in \cite{cybenko, hornik} for deterministic inputs and, later on, extended in \cite{Hornik1991} to accommodate random inputs.

This paper is a generalization of the universality statements in \cite{Hornik1991}  to a discrete-time dynamical context. More specifically, we are interested in the learning not of functions but  of filters that transform semi-infinite random input sequences parameterized by time into outputs that depend on those inputs in a  causal and time-invariant manner. The approximants used are small subfamilies of
reservoir computers (RC)~\cite{maass1, Jaeger04} or reservoir systems. Reservoir computers are filters generated by nonlinear state-space transformations and constitute special types of recurrent neural networks. They are determined by two maps, namely a {  reservoir} $F: \mathbb{R} ^N\times \mathbb{R} ^n\longrightarrow  \mathbb{R} ^N$, $n,N \in \mathbb{N} $,  and a readout map $h: \mathbb{R}^N \rightarrow \mathbb{R}$ that under certain hypotheses transform (or filter) an infinite discrete-time input  ${\bf z}=(\ldots, {\bf z} _{-1}, {\bf z} _0, {\bf z} _1, \ldots) \in (\mathbb{R}^n) ^{\Bbb Z } $ into an output signal ${\bf y} \in \mathbb{R}^{\Bbb Z } $ of the same type using  a state-space transformation given by:
\begin{empheq}[left={\empheqlbrace}]{align}
\mathbf{x} _t &=F(\mathbf{x}_{t-1}, {\bf z} _t),\label{eq:RCSystem1}\\
{y} _t &= h (\mathbf{x} _t), \label{eq:RCSystem2}
\end{empheq}
where $t \in \Bbb Z $ and the dimension $N \in \mathbb{N} $ of the state vectors $\mathbf{x} _t \in \mathbb{R} ^N $ is referred to as the number of virtual neurons of the system. In supervised machine learning applications the reservoir map is very often randomly generated and the memoryless readout is trained so that the output matches a given teaching signal.

Families of systems of this type have already been proved to be universal in different contexts. In the continuous-time setup, it was shown in \cite{Boyd1985} that linear reservoir systems with polynomial readouts or bilinear reservoirs with linear readouts are able to uniformly approximate any fading memory filter with uniformly bounded and equicontinuous inputs. The fading memory property is a continuity feature exhibited by many filters encountered in applications. 

In the discrete-time setup, several universality statements were already part of classical systems theory statements for inputs defined on a finite number of time points \cite{Sontag1979, sontag:polynomial:1979, FliessNormand1980}. In the more general context of semi-infinite inputs, various universality results have been formulated for systems with approximate finite memory \cite{sandberg:esn, sandberg:esn:paper, Matthews:thesis, Matthews1993, perryman:thesis, Stubberuda}. These universality results have been recently extended to the causal and fading memory category in \cite{rc6, rc7}. In those works it has been established the universality of two important families of reservoir systems with linear readouts, namely, the so called state affine systems (SAS) and the echo state networks (ESN). Moreover,  the universality of the SAS  family was established in \cite{rc6} both for uniformly bounded deterministic inputs, as well as for almost surely uniformly bounded stochastic ones. This last statement was shown to be a corollary of a general transfer theorem that proves that very important features of causal and time-invariant filters like the fading memory property or universality are naturally inherited by reservoir systems with almost surely uniformly bounded stochastic inputs from their counterparts with deterministic inputs. 

Unfortunately, almost surely bounded random inputs are not always appropriate for many applications. For example,  most parametric time series models use as driving innovations random variables whose distributions are not compactly supported (Gaussian, for example) in order to ensure adequate levels of performance. The main goal of this work is {\it formulating universality results in the stochastic context that do not impose almost sure uniform boundedness in the inputs}.

The way in which the universality results contained in this paper are articulated differs somewhat from the above quoted references and is more in the vein of  \cite{Hornik1991}. More specifically, in the stochastic universality statements in \cite{rc6}, for example, universal families are presented that uniformly approximate any given filter for any input in a given class of stochastic processes.  In contrast with this strategy and like in  \cite{Hornik1991}, we fix here first a discrete-time stochastic process that models the data generating process (DGP) behind the system inputs that are being considered. Subsequently, families of reservoir filters are spelled out whose images of the DGP are dense in the $L ^p $ sense. Equivalently, the image of the DGP by any measurable causal and time invariant filter can be approximated by the image of one of the members of the universal family with respect to an $L ^p $ norm defined using the law of the prefixed DGP. 

It is important to point out that this approach allows us to {\it formulate universality results for filters that do not necessarily have the fading memory property since only measurability is imposed as a hypothesis}.

The paper contains three main universality statements. The first one shows that linear reservoir systems with either polynomial or neural network readout maps are universal in the $L ^p $ sense. More importantly, two other families with linear readouts are shown to also have this property, namely, trigonometric state-affine systems and echo state networks, which are the most widely used reservoir systems in applications. The linearity of the readout is a key feature of these systems since in supervised machine learning applications it reduces the training task to the solution of a linear regression problem, which can be implemented efficiently also in high-dimensional situations and in the presence of large datasets.

\section{Preliminaries}

In this section we introduce some notation and collect general facts about filters, reservoir systems, and stochastic input signals.

\subsection{Notation} We write $\N=\{0,1,\ldots\}$ and $\Z_-=\{\ldots,-1,0 \}$. The elements of the Euclidean spaces  $\mathbb{R}^n $ will be written as column vectors and will be denoted in bold. Given a vector $\mathbf{v} \in \mathbb{R}  ^n $, we denote its entries by $v_i$ or by $v^{(i)} $, with $i \in \left\{ 1, \dots, n
\right\} $.
$(\R^n)^{\Z}$ and $(\R^n)^{\Z_-}$ denote the sets of infinite $\R^n$-valued sequences of the type $(\ldots, {\bf z}_{-1}, {\bf z}_0,{\bf z}_1,\ldots )$ and $(\ldots, {\bf z}_{-1}, {\bf z}_0)$ with ${\bf z}_i \in \R^n$ for $i \in \Z$ and $i \in \Z_-$, respectively. The elements in these sequence spaces will also be written in bold, for example, ${\bf z}:=(\ldots, {\bf z}_{-1}, {\bf z}_0) \in (\R^n)^{\Z_-}$. We denote by $\mathbb{M}_{n ,  m }$ the space of real $n\times m$ matrices with $m, n \in \mathbb{N} $. When $n=m$, we use the symbol $\mathbb{M}  _n $ to refer to the space of square matrices of order 
$n$. Random variables and stochastic processes will be denoted using upper case characters that will be bold when they are vector valued.

\subsection{Filters and functionals}

A filter is a map $U \colon (\R^n)^{\Z} \to \R^{\Z}$. It is called causal, if for any ${\bf z}, \mathbf{w} \in (\R^n)^{\Z}$ which satisfy ${\bf z}_{\tau} = {\bf w}_{\tau}$ for all $\tau \leq t$ for a given $t \in \Z$, one has that $U({\bf z})_t = U({\bf w})_t$. Denote by $T_{-\tau} \colon (\R^n)^{\Z} \to (\R^n)^{\Z}$ the time delay operator defined by $T_{-\tau}({\bf z})_t := {\bf z}_{t+\tau}$, for any $\tau \in \Bbb Z $. A filter $U$ is called time-invariant, if $T_{-\tau} \circ U = U \circ T_{-\tau}$ for all $\tau \in \Z$. 

Causal and time-invariant filters can be equivalently described  using their naturally associated functionals. We refer to a map $H \colon (\R^n)^{\Z_-} \to \R$ as a functional. Given a causal and time-invariant filter $U$, one  defines the  functional $H_U$  associated to it by setting $H_U({\bf z}):= U({\bf z}^e)_0$. Here ${\bf z}^e$ is an arbitrary extension of ${\bf z} \in (\R^n)^{\Z_-}$ to $(\R^n)^{\Z}$. $H_U$ does not depend on the choice of this extension since $U$ is causal. Conversely, given a functional $H$ one may define a causal and time-invariant filter $U_H \colon (\R^n)^{\Z} \to \R^{\Z}$ by setting $U_H({\bf z})_t := H(\pi_{\Z_-} \circ T_{-t}({\bf z}))$, where $\pi_{\Z_-} \colon (\R^n)^{\Z} \to (\R^n)^{\Z_-}$ is the natural projection.  One may verify that any causal and time-invariant filter can be recovered from its associated functional and conversely. Equivalently, $U = U_{H_U}$ and $H = H_{U_H}$. We refer to \cite{Boyd1985} for further details. 

If $U$ is causal and time-invariant, then for any ${\bf z} \in (\R^n)^{\Z}$ the sequence $U({\bf z})$ restricted to $\Z_-$ only depends on $({\bf z}_t)_{t \in \Z_-}$. Thus we may also consider $U$ as a map $U \colon (\R^n)^{\Z_-} \to \R^{\Z_-}$, but when we do so this will always be clear from the context. 

\subsection{Reservoir computing systems} 

A specific class of filters can be obtained using the reservoir computing systems or reservoir computers (RC) introduced in \eqref{eq:RCSystem1}-\eqref{eq:RCSystem2} when they satisfy the following property: a reservoir system satisfies the echo state property (ESP) if for any ${\bf z} \in (\R^n)^{\Z}$ there exists a unique $\mathbf{x} \in (\R^N)^{\Z}$ such that \eqref{eq:RCSystem1} holds. In this case the RC system gives rise to a filter $U^F_h $ associating to any ${\bf z} \in (\R^n)^{\Z}$ the unique output in \eqref{eq:RCSystem2}, that is, \ $U^F_h({\bf z})_t:= y_t$. Furthermore, it can be shown (see \cite[Proposition~2.1]{rc7}) that $U^F_h$ is necessarily  causal and time-invariant and hence we may associate to $U^F_h$ a reservoir functional $H^{F}_h \colon (\R^n)^{\Z_-} \to \R$ defined as $H^F_h({\bf z}) := U^F_h({\bf z})_0$. 

As seen above, the causal and time-invariant filter $U^F_h$ is uniquely determined by the reservoir functional $H^F_h$. Since the latter is determined by the restriction of the RC system to $\Z_-$,  
we will sometimes consider the system \eqref{eq:RCSystem1}-\eqref{eq:RCSystem2} only for $t \in \Z_-$.

\subsection{Deterministic filters with stochastic inputs}

We are interested in feeding the filters and the systems that we just introduced with stochastic processes as inputs. More explicitly, given a causal and time-invariant filter $U$ that satisfies certain measurability hypotheses, any stochastic process ${\bf Z} = ({\bf Z}_t)_{ t \in \Z_- }$ is mapped to a new stochastic process $(U({\bf Z})_t)_{t \in \Z_-}$. The main contributions in this article address the question of approximating $U({\bf Z})$ by reservoir filters in an $L^p$-sense. We now introduce the precise framework to achieve this goal.

\subsubsection{Probabilistic framework}

Consider a probability space $(\Omega,\Fc,\P)$ on which all random variables are defined. The input signal is modeled as a discrete-time stochastic process ${\bf Z} = ({\bf Z}_t)_{t \in \Z_-}$ with values in $\R^n$. When dealing with stochastic processes we will make no distinctions  between the assignment ${\bf Z}: \Bbb Z_- \times \Omega \rightarrow \mathbb{R}^n  $ and the corresponding  map into path space ${\bf Z}: \Omega \rightarrow ({\Bbb R}^n)^{\mathbb{Z}_-} $. We recall that ${\bf Z}$  is a stochastic process  when the corresponding map ${\bf Z}: \Omega \rightarrow ({\Bbb R}^n)^{\mathbb{Z}_-} $ is measurable. Here $(\R^n)^{\Z_-}$ is equipped with the product $\sigma$-algebra $\otimes_{t \in \Z_-} \mathcal{B}(\R^n)$ (which coincides with the Borel $\sigma$-algebra of $(\R^n)^{\Z_-}$ equipped with the product topology by \cite[Lemma 1.2]{Kallenberg2002}), where $\mathcal{B}(\R^n)$ is the Borel $\sigma$-algebra on ${\Bbb R}^n $.

We denote by  $\Fc_t := \sigma({\bf Z}_0,\ldots,{\bf Z}_{t})$, $t \in \Z_-$, the $\sigma$-algebra generated by $\left\{{\bf Z}_0,\ldots,{\bf Z}_{t}\right\} $ and write $\Fc_{-\infty} :=\sigma({\bf Z}_t \colon t \in \Z_-)$. For $p \in [1,\infty]$ we denote by $L^p(\Omega,\Fc,\P)$ the  Banach space formed by the real-valued random variables in $(\Omega,\Fc,\P)$ that have a finite usual $L^p$ norm $\| \cdot \|_p$.

We say that the process ${\bf Z} $ is stationary when for any $\{ t _1, \ldots, t _k\} \subset \mathbb{Z}_-   $, $h \in \Bbb Z_-  $, and $A _{t _1} , \ldots, A _{t _k} \in \mathcal{B}(\R^n) $, we have that \vspace{-0.5mm}
\begin{multline*}
\mathbb{P} \left({\bf Z}_{t _1} \in A _{t _1} , \ldots, {\bf Z}_{t _k} \in A _{t _k}\right)\\=\mathbb{P} \left({\bf Z}_{t _1+h} \in A _{t _1} , \ldots, {\bf Z}_{t _k+h} \in A _{t _k}\right).
\end{multline*}

%We will sometimes consider ${\bf Z}$ only on $\Z_-$. The restricted process $({\bf Z})_{t \in \Z_-} = \pi_{\Z_-} \circ {\bf Z}$ will also be denoted by ${\bf Z}$, but it will always be clear from the context when it refers to the restriction (e.g.\ when ${\bf Z}$ is the input to a functional defined on $(\R^n)^{\Z_-}$).

\subsubsection{Measurable functionals and filters}

We say that a functional $H$ is measurable when the map between measurable spaces $H \colon \left((\R^n)^{\Z_-}, \otimes_{t \in \Z_-} \mathcal{B}(\R^n)\right) \to \left(\R, \mathcal{B}(\R) \right)$ is measurable. When $H$ is measurable then so is $H({\bf Z}):(\Omega, \mathcal{F}) \to (\R, \mathcal{B}(\mathbb{R}))$  since $H({\bf Z}) = H \circ {\bf Z}$ is the composition of measurable maps and hence  $H({\bf Z})$ is a random variable on $(\Omega,\Fc,\P)$. 

Analogously, we will say that a causal, time-invariant filter $U$ is measurable when the map between measurable spaces $U \colon \left((\R^n)^{\Z}, \otimes_{t \in \Z} \mathcal{B}(\R^n)\right) \to \left(\R^{\Z}, \otimes_{t \in \Z} \mathcal{B}(\R)\right)$ is measurable. In that case, also the restriction of $U$ to $\Z_-$ (see above) is measurable and so $U({\bf Z}) $ is a real-valued stochastic process.

As discussed above, causal, time-invariant filters and functionals are in a one-to-one correspondence. This relation is compatible with the measurability condition, that is, a causal and time-invariant filter is measurable if and only if the associated functional is measurable. In order to prove this statement we show first that 
the  operator $\pi _{\mathbb{Z}_{-}} \circ T _{-t}:  \left((\R^n)^{\Z}, \otimes_{t \in \Z} \mathcal{B}(\R^n)\right) \longrightarrow  \left((\R^n)^{\Z_-}, \otimes_{t \in \Z_-} \mathcal{B}(\R^n)\right)$ is a measurable map, for any $t \in \mathbb{Z}_{-}$.
Indeed, notice first that the projections $p _i: \left((\R^n)^{\Z}, \otimes_{t \in \Z} \mathcal{B}(\R^n)\right) \longrightarrow\left(\R^n,  \mathcal{B}(\R^n)\right) $, $i \in \mathbb{Z}_{-}  $, given by $p _i({\bf z})= {\bf z}_i $  are measurable. Since $\pi _{\mathbb{Z}_{-}}  \circ T _{-t} $ can be written as the  Cartesian product  of measurable maps $\pi _{\mathbb{Z}_{-}}  \circ T _{-t}=\prod _{i=- \infty}^{t} p _i = \left(\ldots,p_{t-2},p_{t-1}, p _t\right)$, it is hence measurable  \cite[Lemma 1.8]{Kallenberg2002}.

Now, if $H$ is a measurable functional, this implies that the associated filter \vspace{-1mm}
\begin{equation}
\label{eq:filterRepres}
U _H= \prod _{t=-\infty}^{0} H \circ  \pi _{\mathbb{Z}_{-}}  \circ T _{-t} 
\vspace{-0.5mm}
\end{equation}
is also measurable since  it is a composition of measurable functions. Conversely, if $U$ is causal, time-invariant, and measurable, then so is the associated functional $H _U=p _0 \circ U $.

\subsubsection{$L^p$-norm for functionals}
Fix $p \in [1,\infty)$ and let $H$ be a measurable functional such that   $H({\bf Z}) \in L^p(\Omega,\Fc,\P)$. The functionals which satisfy that
\begin{equation} \label{eq:distance} \|H({\bf Z})\|_p := \E[|H({\bf Z})|^p]^{1/p}< \infty \end{equation}
will be referred to as $p$-integrable with respect to the input process ${\bf Z}$.

Let us now consider the expression \eqref{eq:distance} from an alternative point of view. 
Denote by $\mu_{{\bf Z}}:= \P \circ {\bf Z}^{-1}$ the law of ${\bf Z}$ when viewed as a $(\R^n)^{\Z_-}$-valued random variable as above. Thus $\mu_{{\bf Z}}$ is a probability measure on $(\R^n)^{\Z_-}$ such that for any measurable set $A \subset (\R^n)^{\Z_-}$  one has $\mu_{{\bf Z}}(A)=\P({\bf Z} \in A)$. The requirement $H({\bf Z}) \in L^p(\Omega,\Fc,\P)$ then translates to $H \in L^p((\R^n)^{\Z_-},\mu_{{\bf Z}})$ and \eqref{eq:distance} is equal \cite[Lemma 1.22]{Kallenberg2002} to 
\begin{equation*}
\|H\|_p^{\mu_{{\bf Z}}}:= \left[ \int_{(\R^n)^{\Z_-}} |H({\bf z})|^p \mu_{{\bf Z}}(\d {\bf z}) \right]^{1/p} =\|H({\bf Z})\|_p.
\end{equation*}

Thus, the  results formulated later on in the paper for functionals with random inputs can also be seen as statements for functionals with deterministic inputs in $(\R^n)^{\Z_-} $,  where the closeness between them is  measured using the norm in $L^p((\R^n)^{\Z_-},\mu_{{\bf Z}})$. Following the terminology used by \cite{Hornik1991} we will refer to $\mu_{{\bf Z}}$ as the input environment measure.

We emphasize that these two points of view are equivalent. Given any probability measure $\mu_{{\bf Z}}$ on $(\R^n)^{\Z_-}$ one may set $\Omega = (\R^n)^{\Z_-}$, $\Fc= \otimes_{t \in \Z_-} \mathcal{B}(\R^n)$, $\P = \mu_{{\bf Z}}$ and define $Z_t({\bf z}):={\bf z}_t$ for all ${\bf z} \in \Omega$. We will switch between these two viewpoints throughout the paper without much warning to the reader.  

%Similarly, one may define a stochastic process $Y$ by setting for $t \in \Z_-$,  $Y_t:= U_G(Z)_t = G((Z_{t+s})_{s \in \Z_-})$. 
%We say that a reservoir functional associated to \eqref{eq:RCSystem} exists, if the system \eqref{eq:RCSystem} admits a $\P$-almost surely\ unique solution $(X,Y)$. In more detail, this means that there exists a stochastic process $(X,Y)$ satisfying  \eqref{eq:RCSystem}  and that any two solutions to \eqref{eq:RCSystem} coincide $\P$-almost surely This amounts to requiring the echo state property only for relevant trajectories, i.e. for $\mu$-a.e. trajectory in $(\R^n)^{\Z_-}$, where $\mu$ is the law of $Z$ under $\P$. 

\subsubsection{$L^p$-norm for filters}

Fix $p \in [1,\infty)$. A causal, time-invariant, measurable filter $U$ is said to be $p$-integrable, if
\begin{equation}
\label{eq:filterNorm} 
\| U({\bf Z}) \|_p:= \sup_{t \in \Z_-}\left\{\E \left[  |U({\bf Z})_t|^p \right]^{1/p}\right\} < \infty. 
\end{equation}
It is easy to see that if $U$ is $p$-integrable, then so is the corresponding functional $H _U $ due to the following inequality
\begin{multline*}
\|H_U({\bf Z})\|_p=\E[|H_U({\bf Z})|^p]^{1/p}=\E[|U({\bf Z})_0|^p]^{1/p}\\
\leq \sup_{t \in \Z_-}\left\{\E \left[  |U({\bf Z})_t|^p \right]^{1/p}\right\}=\| U({\bf Z}) \|_p < \infty.
\end{multline*}

The converse implication holds true when the input process is stationary. In order to show this fact, notice first that if $\mu _t $ is the law of $\pi _{\mathbb{Z}_{-}}  \circ T _{-t}({\bf Z})$, $t \in \mathbb{Z}_{-}$,  and ${\bf Z} $ is by hypothesis stationary then,
for any $\{ t _1, \ldots, t _k\} \subset \mathbb{Z}_-   $ and $A _{t _1} , \ldots, A _{t _k} \in \mathcal{B}(\R^n) $, we have that
\begin{multline*}
\mathbb{P} \left((\pi _{\mathbb{Z}_{-}}  \circ  T _{-t}({\bf Z}))_{t _1} \in A _{t _1} , \ldots, (\pi _{\mathbb{Z}_{-}}  \circ  T _{-t}({\bf Z}))_{t _k} \in A _{t _k}\right)\\=\mathbb{P} \left({\bf Z}_{t _1+t} \in A _{t _1} , \ldots, {\bf Z}_{t _k+t} \in A _{t _k}\right)\\
=\mathbb{P} \left({\bf Z}_{t _1} \in A _{t _1} , \ldots, {\bf Z}_{t _k} \in A _{t _k}\right),
\end{multline*}
which proves that 
\begin{equation}
\label{eq:mutequalsmuz}
\mu_{{\bf Z} }= \mu_t, \quad \mbox{for all} \quad t \in \mathbb{Z}_{-}.
\end{equation}
This identity, together with \eqref{eq:filterRepres}, implies that for any $p$-integrable functional $H$:
\begin{multline}
\label{eq:equal norms}
\| U_H({\bf Z}) \|_p=\sup_{t \in \Z_-}\left\{\E \left[  |U_H({\bf Z})_t|^p \right]^{1/p}\right\}\\
=\sup_{t \in \Z_-}\left\{\E \left[  |H( \pi _{\mathbb{Z}_{-}}  \circ T _{-t}({\bf Z}))|^p \right]^{1/p}\right\}\\=
\sup_{t \in \Z_-}\left\{\left[ \int_{(\R^n)^{\Z_-}} |H({\bf z})|^p \mu_t(\d {\bf z}) \right]^{1/p}\right\}\\=\sup_{t \in \Z_-}\left\{\left[ \int_{(\R^n)^{\Z_-}} |H({\bf z})|^p \mu_{{\bf Z}}(\d {\bf z}) \right]^{1/p}\right\}= \left\|H({\bf Z})\right\|_p< \infty,
\end{multline}
which proves the $p$-integrability of the associated filter $U _H $.

%Fix $p \in [1,\infty)$. To define an appropriate notion of distance between filters we choose (and fix throughout the article) a probability measure on $\Z$, i.e.\ a weighting sequence $(w_t)_{t \in \Z}$ such that $w_t \in [0,1]$ for all $t \in \Z$ and $\sum_{t \in \Z} w_t = 1$. A causal, time-invariant, measurable filter $U$ is said to be $p$-integrable, if
%\begin{equation}\label{eq:filterNorm} \| U({\bf Z}) \|:= \E \left[ \sum_{t \in \Z} w_t |U({\bf Z})_t|^p \right]^{1/p} < \infty. \end{equation}
%The distance between any two such filters $U_1$ and $U_2$ is then defined as $\|U_1({\bf Z}) - U_2({\bf Z})\|$. 
%
%As in the case of functionals, let us provide an alternative point of view formulated in $L^p((\R^n)^{\Z_-},\bar{\mu_{{\bf Z}}})$ for an appropriate $\bar{\mu_{{\bf Z}}}$.
%For any $t \in \Z$ denote by $\mu_t$ the law of the time-delayed (and truncated to $\Z_-$) process $\pi_{\Z_-}(T_{-t}({\bf Z}))$ on $(\R^n)^{\Z_-}$.
%Then, using Fubini's theorem, the definition of $\mu_t$ and \eqref{eq:filterRepres} one can rewrite \eqref{eq:filterNorm} as 
%\begin{equation}\label{eq:filterNormPathSpace} \| U({\bf Z}) \| = \left(\int_{(\R^n)^{\Z_-}} \sum_{t \in \Z} w_t |H_U({\bf z})|^p \mu_t(\d {\bf z}) \right)^{1/p}. \end{equation} 
%Defining the probability measure $\bar{\mu}$ on $(\R^n)^{\Z_-}$ by
%\begin{equation} \label{eq:mubar} \bar{\mu} := \sum_{t \in \Z} w_t \mu_t, \end{equation}
%one may also rewrite \eqref{eq:filterNormPathSpace} as 
%\begin{equation} \label{eq:filterNormFunctional} \|U({\bf Z})\|= \|H_U \|_{L^p((\R^n)^{\Z_-},\bar{\mu})}. \end{equation}

\section{$L^p$-universality results}
Fix $p \in [1,\infty)$, ${\bf Z}  $ an input process, and a functional $H$ such that $H(\mathbf{Z}) \in L^p(\Omega,\Fc,\P)$. The goal of this section is finding simple families of reservoir systems that are able to approximate $H(\mathbf{Z})$ as accurately as needed in the $L^p$-sense. The first part contains a result that shows that linear reservoir maps with polynomial readouts are able to carry this out. The situation is hence identical to the case for deterministic inputs or for almost surely uniformly bounded stochastic ones \cite{rc6}. The second part contains a family that is able to achieve universality using only linear readouts, which is of major importance for applications since in that case the training effort reduces to solving a linear regression. Finally, we prove the universality of echo state networks which is the most widely used family of reservoir systems with linear readouts.

\subsection{Linear reservoirs with nonlinear readouts}

Consider a reservoir system with linear reservoir map and a polynomial readout. More precisely, given $A \in \M_N$, ${\bf c} \in \M_{N,n}$, and $h\in \Pol_N$ a real-valued polynomial in $N$ variables, consider the system 
\begin{equation}\label{eq:RCPolReadoutDet}
\left\{
\begin{aligned}
\mathbf{x}_t & = A \mathbf{x}_{t-1} + {\bf c} {\bf z}_t, \quad t \in \Z_-, \\ 
y_t & = h(\mathbf{x}_t), \quad t \in \Z_-,
\end{aligned}
\right.
\end{equation}
for any ${\bf z} \in (\R^n)^{\Z_-}$.
If the matrix $A$ is chosen so that $\sigma_{{\rm max}}(A)<1 $, then this system has the echo state property and the corresponding reservoir filter $U^{A,{\bf c}}_h $  is causal and time-invariant \cite{rc6}. We denote by $H^{A,{\bf c}}_h$ the associated functional. We are interested in the approximation capabilities that can be achieved by using processes of the type $H^{A,{\bf c}}_h(\mathbf{Z})$, where ${\bf Z} $ is a fixed input process and $H^{A,{\bf c}}_h(\mathbf{Z}) = Y_0$, with $Y_0$ obviously determined by the stochastic reservoir system 
\begin{equation}\label{eq:RCPolReadout}
\left\{
\begin{aligned}
\mathbf{X}_t & = A \mathbf{X}_{t-1} + c \mathbf{Z}_t, \quad t \in \Z_-, \\ 
Y_t & = h(\mathbf{X}_t), \quad t \in \Z_-.
\end{aligned}
\right.
\end{equation}

\begin{proposition}
\label{prop:Nonlinear} 
Fix $p \in [1,\infty)$, let
${\bf Z} $ be a fixed $\mathbb{R}^n $-valued input process, and let $H$ be a functional such that $H(\mathbf{Z}) \in L^p(\Omega,\Fc,\P)$. Suppose that for any $K \in \N$ there exists $\alpha >0$ such that 
\begin{equation}\label{eq:mixedExpMoment} 
 \E\left[\exp\left(\alpha \sum_{k=0}^K \sum_{i=1}^n |Z^{(i)}_{-k}|\right)\right] < \infty.
 \end{equation}
Then, for any $\varepsilon > 0$ there exists $N \in \N$, $A \in \M_N$, ${\bf c} \in \M_{N,n}$, and $h \in \Pol_N$ such that  \eqref{eq:RCPolReadoutDet} has the echo state property, the corresponding filter is causal and time-invariant, the associated functional satisfies $H^{A,c}_h(\mathbf{Z}) \in L^p(\Omega,\Fc,\P)$, and
\begin{equation}\label{eq:approxPolReadout}
\| H(\mathbf{Z}) - H^{A,c}_h(\mathbf{Z}) \|_p < \varepsilon.
\end{equation}
If the input process ${\bf Z} $ is stationary then 
\begin{equation}\label{eq:approxPolReadoutfilter}
\| U_H(\mathbf{Z}) - U^{A,c}_h(\mathbf{Z}) \|_p < \varepsilon.
\end{equation}
\end{proposition}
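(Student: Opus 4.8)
The plan is to reduce the statement to a finite-dimensional polynomial approximation and then realize that polynomial \emph{exactly} as a linear reservoir with polynomial readout, in three steps followed by a transfer to the filter norm.

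\emph{Finite-memory reduction.} Since $H(\mathbf{Z})$ is measurable with respect to $\Fc_{-\infty}=\sigma(\mathbf{Z}_t:t\in\Z_-)$ and the $\sigma$-algebras $\Fc_{-K}=\sigma(\mathbf{Z}_0,\ldots,\mathbf{Z}_{-K})$ increase to $\Fc_{-\infty}$ as $K\to\infty$, L\'evy's upward martingale convergence theorem gives $\E[H(\mathbf{Z})\mid\Fc_{-K}]\to H(\mathbf{Z})$ in $L^p$ (using $H(\mathbf{Z})\in L^p$ and $p\geq 1$). Hence for any $\varepsilon>0$ I fix $K$ with $\|H(\mathbf{Z})-\E[H(\mathbf{Z})\mid\Fc_{-K}]\|_p<\varepsilon/2$. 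By the Doob--Dynkin lemma, $\E[H(\mathbf{Z})\mid\Fc_{-K}]=g(\mathbf{Z}_0,\ldots,\mathbf{Z}_{-K})$ for a measurable $g\colon\R^{n(K+1)}\to\R$, and $g\in L^p(\mu_K)$, where $\mu_K$ is the law of $(\mathbf{Z}_0,\ldots,\mathbf{Z}_{-K})$.

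\emph{Polynomial approximation (the crux).} I next approximate $g$ by a polynomial $P$ in the $L^p(\mu_K)$ norm. This is exactly where hypothesis \eqref{eq:mixedExpMoment} enters: the quantity $\sum_{k=0}^K\sum_{i=1}^n|Z^{(i)}_{-k}|$ is the $\ell^1$-norm of $(\mathbf{Z}_0,\ldots,\mathbf{Z}_{-K})$, so the assumption says that $\mu_K$ has a finite exponential moment $\int\exp(\alpha\|\mathbf{x}\|_1)\,\mu_K(\d\mathbf{x})<\infty$ for some $\alpha>0$. This ensures all moments of $\mu_K$ are finite (so $\Pol_{n(K+1)}\subset L^p(\mu_K)$) and, more substantially, that polynomials are \emph{dense} in $L^p(\mu_K)$. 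I expect this density to be the main obstacle: compactly supported continuous functions are dense in $L^p(\mu_K)$ and can be approximated uniformly on cubes by polynomials via Stone--Weierstrass, but polynomials grow without bound, so the tail contribution outside a large cube must be controlled; the exponential moment is precisely what dominates polynomial growth and makes this tail estimate work. I would either invoke a lemma to this effect or prove it along this route, obtaining $P\in\Pol_{n(K+1)}$ with $\|\E[H(\mathbf{Z})\mid\Fc_{-K}]-P(\mathbf{Z}_0,\ldots,\mathbf{Z}_{-K})\|_p<\varepsilon/2$.

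\emph{Exact reservoir realization.} It remains to write $(\mathbf{z}_0,\ldots,\mathbf{z}_{-K})\mapsto P(\mathbf{z}_0,\ldots,\mathbf{z}_{-K})$ as a functional $H^{A,\mathbf{c}}_h$. I set $N=n(K+1)$, fix $\lambda\in(0,1)$, let $\mathbf{c}=(I_n,0,\ldots,0)^\top\in\M_{N,n}$ inject the input into the first $\R^n$-block, and take $A\in\M_N$ to be the block-subdiagonal matrix with blocks $\lambda I_n$ shifting block $j-1$ into block $j$. Then $A^{K+1}=0$ and $\sigma_{\max}(A)=\lambda<1$, so \eqref{eq:RCPolReadoutDet} has the echo state property and the induced state is the finite sum $\mathbf{x}_0=\sum_{j=0}^{K}A^j\mathbf{c}\,\mathbf{z}_{-j}$, whose $j$-th block equals $\lambda^{j}\mathbf{z}_{-j}$. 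Choosing the polynomial readout $h(\mathbf{x}):=P(\mathbf{x}^{[0]},\lambda^{-1}\mathbf{x}^{[1]},\ldots,\lambda^{-K}\mathbf{x}^{[K]})$ gives $H^{A,\mathbf{c}}_h(\mathbf{z})=P(\mathbf{z}_0,\ldots,\mathbf{z}_{-K})$ identically; in particular $H^{A,\mathbf{c}}_h(\mathbf{Z})=P(\mathbf{Z}_0,\ldots,\mathbf{Z}_{-K})\in L^p$, and the triangle inequality combined with the previous two steps yields \eqref{eq:approxPolReadout}. Finally, when $\mathbf{Z}$ is stationary, \eqref{eq:approxPolReadoutfilter} follows from \eqref{eq:approxPolReadout} by applying the norm identity \eqref{eq:equal norms} to the $p$-integrable functional $H-H^{A,\mathbf{c}}_h$, since $U_H-U^{A,\mathbf{c}}_h=U_{H-H^{A,\mathbf{c}}_h}$ and therefore $\|U_H(\mathbf{Z})-U^{A,\mathbf{c}}_h(\mathbf{Z})\|_p=\|H(\mathbf{Z})-H^{A,\mathbf{c}}_h(\mathbf{Z})\|_p<\varepsilon$.
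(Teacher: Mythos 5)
Your proposal follows the same architecture as the paper's proof: truncate the memory via L\'evy's upward theorem and the Doob--Dynkin lemma, approximate the resulting finite-dimensional function by a polynomial in $L^p(\mu_K)$, and realize that polynomial exactly through a nilpotent linear reservoir whose state stacks the last $K+1$ inputs. Your reservoir construction (block shift with factor $\lambda\in(0,1)$, compensated in the readout) is a harmless variant of the paper's choice with identity blocks, and your treatment of the stationary case via \eqref{eq:equal norms} applied to $H-H^{A,\mathbf{c}}_h$ is exactly the paper's.

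The one place to be careful is the step you yourself single out as the crux: density of $\Pol_{n(K+1)}$ in $L^p(\R^{n(K+1)},\mu_K)$. The route you sketch --- approximate by a compactly supported continuous function, apply Stone--Weierstrass on a large cube, and control the tail of the polynomial off the cube using the exponential moment --- does not work as stated. The polynomial produced by Stone--Weierstrass is not controlled outside the cube: as the uniform error on the cube shrinks, its degree and coefficients can blow up, so the integral of $|P|^p$ over the complement of the cube need not be small, and no ordering of the quantifiers repairs this. That the difficulty is genuine, and not an artifact of a crude estimate, is shown by the lognormal example in the paper's own remark: a lognormal law has finite moments of all orders (so every ``polynomial growth is integrable'' bound is available), yet polynomials fail to be dense in $L^p$ for $p\geq 2$. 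Any correct argument must therefore use the exponential moment \eqref{eq:mixedExpMoment} in a way tied to determinacy of the moment problem; the paper does this by citing the classical one-dimensional result of Berg (exponential integrability of each marginal implies $\Pol_1$ is dense in $L^p(\R,\mu_K^j)$) together with Petersen's tensorization result to pass to $\R^{n(K+1)}$. Your fallback of ``invoking a lemma to this effect'' is the right move; just be aware that the lemma is a nontrivial moment-problem result, not something recoverable from Stone--Weierstrass plus a tail bound.
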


\begin{proof} The proof consists of two steps: In the first one we use assumption \eqref{eq:mixedExpMoment} and classical results in the literature to establish that 
\begin{equation}\label{eq:PolDense} \Pol_{n(K+1)} \text{ is dense in  } L^p(\R^{n(K+1)},\mu_K), \text{ for all  } K \in \N, \end{equation} 
where $\mu_K$ is the law of $(Z_0^{(1)},Z_0^{(2)},\ldots,Z_{-K}^{(n-1)},Z_{-K}^{(n)})$ on $\R^{n(K+1)}$ under $\P$.
In the second step we then use \eqref{eq:PolDense} to construct a linear RC system of the type in \eqref{eq:RCPolReadoutDet} that yields the approximation statement \eqref{eq:approxPolReadout}.

\textit{Step 1:} Denote by $\mu_K$ the law of $(Z_0^{(1)},Z_0^{(2)},\ldots,Z_{-K}^{(n-1)},Z_{-K}^{(n)})$ on $\R^{N}$ under $\P$, where $N:=n(K+1)$.
By \eqref{eq:mixedExpMoment} there exists $\alpha > 0$ such that $\int_{\R^{N}} \exp(\alpha \|{\bf z}\|_1) \mu_K(\d {\bf z}) < \infty$, where here and in the rest of this proof $\|\cdot\|_1$ denotes the Euclidean $1$-norm. Denoting by $\mu_K^{j}$ the $j$-th marginal distribution of $\mu_K$,  this implies for $j=1,\ldots,N$ that
\[ \int_\R \exp(\alpha |{z}^{(j)}|) \mu_K^j(\d {z}^{(j)}) \leq \int_{\R^{N}} \exp(\alpha \|{\bf z}\|_1) \mu_K(\d {\bf z}) < \infty.\]
Consequently,  by \cite[Theorem~6]{Berg1981}, $\Pol_1$ is dense in $L^p(\R,\mu_K^j)$ for any $p \in [1,\infty)$, $j=1,\ldots,N$. By \cite[Proposition page 364]{Petersen1983} this implies that $\Pol_N$ is dense in $L^p(\R^N,\mu_K)$, where we note that $\mu_K$ indeed satisfies the moment assumption in \cite[Page 361]{Petersen1983}: since $x^{2 m} \leq \exp(\alpha x)$ for any $x \geq 0$, $m \in \N$, one has
\[\begin{aligned} \int_{\R^{N}} \|{\bf z}\|_2^{2 m} \mu_K(\d {\bf z}) & \leq  \int_{\R^{N}} \exp(\alpha \|{\bf z}\|_2) \mu_K(\d {\bf z}) \\ & \leq  \int_{\R^{N}} \exp(\alpha \|{\bf z}\|_1) \mu_K(\d {\bf z}) < \infty. 
\end{aligned} \]

\textit{Step 2:} Let $\varepsilon > 0$. By Lemma~\ref{lem:condExp} in the appendix there exists $K \in \N$ such that 
\begin{equation}
\label{eq:auxEq0}
\|H({\bf Z})-\E[H({\bf Z})|\Fc_{-K}]\|_p < \frac{\varepsilon}{2}
\end{equation}
where $\Fc_{-K} := \sigma({\bf Z}_0,\ldots,{\bf Z}_{-K})$.
In the following paragraphs we will establish the approximation statement \eqref{eq:approxPolReadout} for $\E[H(Z)|\Fc_K]$ instead of $H(Z)$. Combining this with \eqref{eq:auxEq0} will then yield \eqref{eq:approxPolReadout}. 

Let $N:=n(K+1)$. By definition, $\E[H({\bf Z})|\Fc_{-K}]$ is $\Fc_{-K}$-measurable and hence there exists \cite[Lemma~1.13]{Kallenberg2002} a measurable function $g_K\colon \R^N \to \R$ such that $\E[H({\bf Z})|\Fc_{-K}] = g_K  ({\bf Z}_0,\ldots,{\bf Z}_{-K})$. Furthermore, 
\begin{multline*} \int_{\R^N} |g_K({\bf z})|^p \mu_K(\d {\bf z})  \\
= \E[|\E[H({\bf Z})|\Fc_{-K}]|^p]  \leq \E[|H({\bf Z})|^p]  < \infty,
\end{multline*}
by standard properties of conditional expectations (see, for instance, \cite[Theorem~5.1.4]{Durrett2010}) and the assumption that $H({\bf Z}) \in L^p(\Omega,\Fc,\P)$. Thus, $g_K \in L^p(\R^N,\mu_K)$ and using the statement \eqref{eq:PolDense} established in Step 1, there exists $h \in \Pol_N$ such that 
\begin{multline}
\label{eq:auxEq1}   
\|\E[H({\bf Z})|\Fc_{-K}]-h({\bf Z}_0^{\top},\ldots,{\bf Z}_{-K}^{\top})\|_p  \\= \|g_K - h\|_{L^p(\R^N,\mu_K)}  < \frac{\varepsilon}{2}.  
\end{multline}
Define now a reservoir system of the type \eqref{eq:RCPolReadout} with inputs given by the random variables ${\bf Z} _t  $, $t \in \mathbb{Z}_{-} $ and reservoir matrices $A \in \M_N$ and $c \in \M_{N,n}$ with all entries equal to $0$ except $A_{i,i-n}=1$ for $i=n+1,\ldots,N$ and $c_{i,i} = 1$
for $i=1,\ldots,n$, that is
\begin{equation*}
A= \left(
\begin{array}{cc}
 \boldsymbol{0}_{n,nK}&\boldsymbol{0}_{n,n}\\
  \boldsymbol{I}_{nK}&\boldsymbol{0}_{n,n}
\end{array}
\right), \quad \mbox{and} \quad
c= \left(
\begin{array}{c}
\boldsymbol{I}_{n}\\
\boldsymbol{0}_{nK,n}\\
\end{array}
\right).
\end{equation*}
This system has the echo state property (all the eigenvalues of $A$ equal zero) and has a unique causal and time invariant solution associated to the reservoir states ${\bf X} _t:= \left({\bf Z} _t^{\top}, {\bf Z}_{t-1}^{\top},\ldots , {\bf Z}_{t-K}^{\top}\right)^\top$, $t \in \mathbb{Z}_{-} $. It is easy to verify that the corresponding reservoir functional is given by
\begin{equation}
\label{eq:auxEq2} 
H^{A,c}_h({\bf Z}) = h({\bf Z}_0^{\top},\ldots,{\bf Z}_{-K}^{\top}). 
\end{equation}
Now the triangle inequality and \eqref{eq:auxEq0}, \eqref{eq:auxEq1} and \eqref{eq:auxEq2} allow us  to conclude \eqref{eq:approxPolReadout}.

The statement in \eqref{eq:approxPolReadoutfilter} in the presence of the stationarity hypothesis for $\mathbf{Z} $ is a straightforward consequence of \eqref{eq:mutequalsmuz} and the equality \eqref{eq:equal norms}.
\end{proof}

\begin{remark} \label{condition iid}
A sufficient condition for \eqref{eq:mixedExpMoment} to hold is that the random variables
$\{\mathbf{Z}_t \colon t \in \Z_-\}$ are independent and that for each $t$, there exists a constant $\alpha >0$ such that $\E[\exp(\alpha \sum_{i=1}^n |Z_t^{(i)}|)] < \infty$.
\end{remark}

\begin{remark}
Assumption \eqref{eq:mixedExpMoment} can be replaced by alternative assumptions but it can not be removed. Even if $n=1$ and $\{{\bf Z}_t \colon t \in \Z_-\}$ are independent and identically distributed with distribution $\nu$, a  condition \textit{stronger} than the existence of moments of all orders for $\nu$ is required. As a counterexample, one may take for $\nu$ a lognormal distribution. Then $\nu$ has moments of all orders, but \eqref{eq:mixedExpMoment} is not satisfied. Let us now argue that the approximation result proved under assumption~\eqref{eq:mixedExpMoment} fails in this case. The following argument relies on results for the classical \textit{moment problem} (see, for example, the collection of references in \cite{Ernst2012}).

Indeed, by \cite{Heyde1963} $\nu$ is not determinate (there exist other probability measures with identical moments) and thus (see e.g.\ \cite[Theorem~4.3]{Freud1971}) 
$\Pol_1$ is not dense in $L^p(\R,\nu)$ for $p \geq 2$. In particular, there exists $g \in L^p(\R,\nu)$ and $\varepsilon > 0$ such that $\|g - \tilde{h} \|_p > \varepsilon$ for all $\tilde{h} \in \Pol_1$. Suppose that we are in the case $n=1$ and let $\{{Z}_t \colon t \in \Z_-\}$ be independent and identically distributed with distribution $\nu$ and $H({\bf z}):=g({z}_0)$ for ${\bf z} \in \R^{\Z_-}$. Then, for any choice of $N$, $A$, $c$ and $h$ one has $\E[H^{A,c}_h({\bf Z})|\Fc_0]=\tilde{h}({Z}_0)$, where $\tilde{h}(x):=\E[h(A\mathbf{X}_{-1}+c x)], x \in \R$, is a polynomial. Thus one may  use \cite[Theorem~5.1.4]{Durrett2010} and the fact that by construction $H({\bf Z})$ is $\Fc_0$-measurable to obtain 
\[\begin{aligned} \|H({\bf Z})- H^{A,c}_h({\bf Z})\|_p & \geq \|\E[H({\bf Z})|\Fc_0]- \E[H^{A,c}_h({\bf Z})|\Fc_0]\|_p \\ & = \|g-\tilde{h}\|_p > \varepsilon.\end{aligned} \]
\end{remark}

\begin{remark}
In previous reservoir computing universality results for both deterministic and stochastic inputs quoted in the introduction there was an important continuity hypothesis called the fading memory property that does not play a role here and that has been replaced by the integrability requirement $H \in L^p((\R^n)^{\Z_-},\mu_{{\bf Z}})$. In particular, the universality results that we just proved and those that come in the next section (see Theorem \ref{thm:TSAS}) yield approximations for filters which do not necessarily have the fading memory property. Whether or not the approximation results apply depends on the integrability condition with respect to the input environment measure $\mu_{{\bf Z}}$. Consider, for example, the functional associated to the peak-hold operator \cite{Boyd1985}. In the discrete-time setting, the associated functional is
\[ H({\bf z})= \sup_{t \leq 0} {z}_t, \quad \mbox{with} \quad {\bf z} \in \mathbb{R}^{\mathbb{Z}_-}.\]
We now show that the two possibilities $H \in L^p((\R^n)^{\Z_-},\mu_{{\bf Z}})$ and $H \notin L^p((\R^n)^{\Z_-},\mu_{{\bf Z}})$ are feasible, depending on the choice of $\mu_{{\bf Z}}$:
\begin{itemize}
\item Let ${\bf Z}=(Z_t)_{t \in \Z_-}$ be one dimensional independent and identically distributed (i.i.d) random variables with unbounded support and denote by $\mu_{\bf Z}$ the law of ${\bf Z}$ on $\R^{\Z_-}$. Denoting by $F$ the distribution function of $Z_1$ and using the i.i.d assumption one calculates, for any $a \in \R$,
\[\begin{aligned} \P(H({\bf Z})>a) & =1-\P(\cap_{t < 0} \{Z_t \leq a\})\\ & =1-\lim_{n \to \infty} F(a)^n = 1. \end{aligned} \]
Hence, we can conclude that $H({\bf Z})=\infty$, $\mu_{{\bf Z}}$-almost everywhere and therefore $H \notin L^p((\R^n)^{\Z_-},\mu_{{\bf Z}})$.
\item Consider now the same setup, but  assume this time that the random variables have bounded support, that is, for some $a_{\mathrm{max}}\in \mathbb{R}$ one has that $P(Z_t \leq a_{\mathrm{max}})=1$ and $P(Z_t> a_{\mathrm{max}})=0$. Then, the same argument shows that $H({\bf Z})=a_{\mathrm{max}}$, $\mu_{{\bf Z}}$-almost everywhere and therefore $H \in L^p((\R^n)^{\Z_-},\mu_{{\bf Z}})$. 
\end{itemize}
\end{remark}

\begin{remark}
From the proof of Proposition~\ref{prop:Nonlinear} one sees that one could replace in its statement $\Pol_N$ by any other family $\{\Hc_N\}_{N \in \N}$ that satisfies the density statement \eqref{eq:PolDense}. In particular, the following corollary shows that this result can be obtained with readouts made out of neural networks. 
\end{remark}

Denote by  $\mathcal{H}_N$ the set feedforward one hidden layer neural networks with inputs in $\R^N$ that are constructed with a fixed activation function $\activF$. More specifically, $\mathcal{H}_N$ is made of functions $h \colon \R^N \to \R$ of the type
\begin{equation} 
\label{eq:NN}  
h({\bf x}) = \sum_{j=1}^k \beta_j \activF(\bm{\alpha}_j \cdot {\bf x} - \theta_j), 
\end{equation}
for some $k \in \N$, $\beta_j,\theta_j \in \R$, and $\bm{\alpha}_j \in \R^N$, for $j=1,\ldots,k$.

\begin{corollary}  \label{cor:nn}
In the setup of Proposition~\ref{prop:Nonlinear}, consider the family of neural networks $h \in \Hc_N$ constructed with a fixed activation function $\activF$ that is bounded and non-constant. Then, for any $\varepsilon > 0$ there exists $N \in \N$, $A \in \M_N$, $c \in \M_{N,n}$, and a neural network $h \in \Hc_N$ such that the corresponding reservoir system  \eqref{eq:RCPolReadoutDet} has the echo state property and has a unique causal and time-invariant filter associated. Moreover, the functional $H^{A,c}_h({\bf Z}) \in L^p(\Omega,\Fc,\P)$ and satisfies that
\begin{equation}\label{eq:approxNetworkReadout}
\| H({\bf Z}) - H^{A,c}_h({\bf Z}) \|_p < \varepsilon.
\end{equation}

\end{corollary}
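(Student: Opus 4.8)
The plan is to follow the proof of Proposition~\ref{prop:Nonlinear} essentially verbatim, replacing the polynomial density statement \eqref{eq:PolDense} by its neural network analogue. As the remark preceding the corollary points out, the only place where the polynomial structure of the readout enters the argument is through the density property \eqref{eq:PolDense}: the shift-register reservoir construction (the matrices $A$ and $c$), the echo state property, the truncation of $H({\bf Z})$ by $\E[H({\bf Z})\mid\Fc_{-K}]$, and the concluding triangle inequality are all insensitive to whether the readout family is $\Pol_N$ or $\Hc_N$. I would therefore reduce the entire statement to establishing the single density claim
\begin{equation*}
\Hc_N \text{ is dense in } L^p(\R^N,\mu_K),\quad N:=n(K+1),\ K\in\N,
\end{equation*}
where $\mu_K$ is the law of $(Z_0^{(1)},\ldots,Z_{-K}^{(n)})$, and then copy Step~2 of Proposition~\ref{prop:Nonlinear} with $h\in\Hc_N$ in place of $h\in\Pol_N$.

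To obtain this density claim I would invoke the static $L^p$-universal approximation theorem for single hidden layer feedforward networks. Since $\mu_K$ is the law of a random vector it is a finite (indeed, probability) measure on $\R^N$, and $\activF$ is bounded and non-constant by hypothesis; under exactly these assumptions \cite{Hornik1991} establishes that the family $\Hc_N$ of functions of the form \eqref{eq:NN} is dense in $L^p(\R^N,\mu_K)$ for every $p\in[1,\infty)$. There is a pleasing structural symmetry here: the dynamic, filter-level universality for neural network readouts is derived directly from the static $L^p$ universality result of \cite{Hornik1991} that this paper sets out to generalize. I would also emphasize one contrast with the polynomial case: because the approximating functions in $\Hc_N$ are bounded, the exponential moment condition \eqref{eq:mixedExpMoment} plays no role in securing density (it was needed in Step~1 of Proposition~\ref{prop:Nonlinear} only to resolve the moment-determinacy issue for polynomials). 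The density step for $\Hc_N$ uses nothing beyond finiteness of $\mu_K$.

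With the density claim in hand, I would reproduce Step~2 of Proposition~\ref{prop:Nonlinear}. Using Lemma~\ref{lem:condExp}, fix $K\in\N$ with $\|H({\bf Z})-\E[H({\bf Z})\mid\Fc_{-K}]\|_p<\varepsilon/2$; represent the conditional expectation as $g_K({\bf Z}_0,\ldots,{\bf Z}_{-K})$ with $g_K\in L^p(\R^N,\mu_K)$ by the same conditional-expectation estimate used there; and use the neural network density to choose $h\in\Hc_N$ with $\|g_K-h\|_{L^p(\R^N,\mu_K)}<\varepsilon/2$. The identical shift-register system, which has the echo state property (all eigenvalues of $A$ vanish) and the explicit unique causal solution ${\bf X}_t=({\bf Z}_t^\top,\ldots,{\bf Z}_{t-K}^\top)^\top$, then yields $H^{A,c}_h({\bf Z})=h({\bf Z}_0^\top,\ldots,{\bf Z}_{-K}^\top)$, and the triangle inequality delivers \eqref{eq:approxNetworkReadout}. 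The integrability $H^{A,c}_h({\bf Z})\in L^p(\Omega,\Fc,\P)$ is immediate, and in fact stronger than before: since $h$ is a finite linear combination of bounded functions $\activF$, the random variable $H^{A,c}_h({\bf Z})$ is bounded.

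I do not expect a serious obstacle. The whole content is locating the correct $L^p$ density theorem for bounded non-constant activations against a finite measure and checking that $\mu_K$ meets its hypotheses, after which the reservoir machinery of Proposition~\ref{prop:Nonlinear} transfers unchanged. The only point requiring a little care is to cite the version of \cite{Hornik1991} valid for $\activF$ that is merely bounded and non-constant, without presupposing continuity, since the corollary assumes nothing more.
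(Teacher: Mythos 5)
Your proposal is correct and coincides with the paper's own proof: the authors likewise invoke \cite[Theorem~1]{Hornik1991} to obtain density of $\Hc_N$ in $L^p(\R^N,\mu_K)$ for the finite measure $\mu_K$ and then repeat Step~2 of Proposition~\ref{prop:Nonlinear} verbatim with the shift-register reservoir. Your additional observations — that the exponential moment condition is not needed for the neural network density step and that $H^{A,c}_h({\bf Z})$ is in fact bounded — are accurate refinements consistent with the paper's argument.
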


\begin{proof}
By \cite[Theorem~1]{Hornik1991} the set $\Hc_N$ is dense in $L^p(\R^N,\mu)$ for any finite measure $\mu$ on $\R^N$. Thus, statement \eqref{eq:PolDense} holds with  $\Hc_N$ replacing $\Pol_{n(K+1)}$. Mimicking line by line the proof of Step~2 in Proposition~\ref{prop:Nonlinear} then proves the Corollary.
\end{proof}

\subsection{Trigonometric state-affine systems with linear readouts} Fix $M,N \in \N$ and consider $R \colon \R^n \to \M_{N,M}$ defined by
\begin{equation}
\label{eq:trigPolynomial} 
R({\bf z}):= \sum_{k=1}^r A_k \cos(\mathbf{u}_k \cdot {\bf z}) + B_k \sin(\mathbf{v}_k \cdot {\bf z}), \quad {\bf z} \in \R^n ,
\end{equation}
for some $r \in \N$, $A_k, B_k \in \M_{N,M}$, $\mathbf{u} _k, \mathbf{v} _k \in \R^n$, for $k=1,\ldots,r$. The symbol $\Trig_{N,M}$ denotes the set of all functions of the type \eqref{eq:trigPolynomial}. We  call the elements of $\Trig_{N,M}$ trigonometric polynomials.

We now introduce reservoir systems with linear readouts and reservoir maps constructed using trigonometric polynomials: let $N \in \N$, ${\bf W} \in \R^N$, $P \in \Trig_{N,N}$, $Q \in \Trig_{N,1}$ and define, for any ${\bf z} \in (\R^n)^{\Z_-}$, the system: 
\begin{equation}\label{eq:RCTSASDet}
\left\{
\begin{aligned}
\mathbf{x}_t & = P({\bf z}_t) \mathbf{x}_{t-1} + Q({\bf z}_t), \quad t \in \Z_-, \\ 
y_t & = {\bf W}^\top \mathbf{x}_t, \quad t \in \Z_-.
\end{aligned}
\right.
\end{equation}
We call the systems of this type trigonometric state-affine systems.
When such a system has the echo state property and a unique causal and time-invariant solution for any input, we denote by $U^{P,Q}_{\bf W} $ the corresponding filter and by  $H^{P,Q}_{\bf W}({\bf z}):= y_0$ the associated functional. As in the previous section,  we fix $p \in [1,\infty)$, ${\bf Z}  $ an input process, and a functional $H$ such that $H(\mathbf{Z}) \in L^p(\Omega,\Fc,\P)$ and we are interested in approximating $H(\mathbf{Z})$ by systems of the form $H^{P,Q}_{\bf W}(\mathbf{Z})$. Again, we will write $H^{P,Q}_{\bf W}(\mathbf{Z})=Y_0$, where $Y_0$ is uniquely determined by the reservoir system with stochastic inputs
\begin{equation}\label{eq:RCTSAS}
\left\{
\begin{aligned}
\mathbf{X}_t & = P(Z_t) \mathbf{X}_{t-1} + Q(Z_t), \quad t \in \Z_-, \\ 
Y_t & = {\bf W}^\top \mathbf{X}_t, \quad t \in \Z_-.
\end{aligned}
\right.
\end{equation}
Define $\A$ as the set of four-tuples $(N,{\bf W},P,Q)\in \N \times \R^N \times \Trig_{N,N} \times \Trig_{N,1}$ whose associated systems \eqref{eq:RCTSASDet} have the echo state property and the unique solutions are causal and time-invariant. In particular, for such $(N,{\bf W},P,Q)$ a reservoir functional $H^{P,Q}_{\bf W} $ associated to \eqref{eq:RCTSASDet} exists.

\begin{theorem}
\label{thm:TSAS}
Let $p \in [1,\infty)$ and let
${\bf Z} $ be a fixed $\mathbb{R}^n $-valued input process.
Denote by $\TSASFunc_{{\bf Z}}$ the set of reservoir functionals of the type \eqref{eq:RCTSASDet} which are $p$-integrable, that is,\
\[ \TSASFunc_{{\bf Z}} := \{ H^{P,Q}_{\bf W}({\bf Z}) \, : \, (N,{\bf W},P,Q) \in \A \} \cap L^p(\Omega,\Fc,\P).  \]
Then $\TSASFunc_{{\bf Z}}$ is dense in $L^p(\Omega,\Fc_{-\infty},\P)$. 

In particular, for any functional $H$ such that $H(\mathbf{Z}) \in L^p(\Omega,\Fc,\P)$ and any $\varepsilon > 0$, there exists $N \in \N$, ${\bf W} \in \R^N$, $P \in \Trig_{N,N}$ and $Q \in \Trig_{N,1}$ such that the system \eqref{eq:RCTSASDet} has the echo state property and causal and time-invariant solutions. Moreover, $H^{P,Q}_{\bf W}({\bf Z}) \in L^p(\Omega,\Fc,\P)$ and
\begin{equation}\label{eq:approxTSAS}
\| H({\bf Z}) - H^{P,Q}_{\bf W}({\bf Z}) \|_p < \varepsilon.
\end{equation}
If the input process ${\bf Z} $ is stationary then 
\begin{equation}\label{eq:approxPolReadoutfiltersas}
\| U_H(\mathbf{Z}) - U^{P,Q}_{\bf W}(\mathbf{Z}) \|_p < \varepsilon.
\end{equation}
\end{theorem}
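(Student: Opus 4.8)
The plan is to follow the two-step template of Proposition~\ref{prop:Nonlinear}, modified to account for the fact that the readout in \eqref{eq:RCTSASDet} is now \emph{linear}: the nonlinearity must be produced inside the state-affine recursion rather than in the readout, and the polynomial density argument must be replaced by a trigonometric one that is moreover free of moment conditions. First I would reduce the problem to a finite-memory approximation. Exactly as in Step~2 of Proposition~\ref{prop:Nonlinear}, Lemma~\ref{lem:condExp} shows that $\bigcup_{K \in \N} L^p(\Omega,\Fc_{-K},\P)$ is dense in $L^p(\Omega,\Fc_{-\infty},\P)$; hence for a prescribed $\varepsilon$ it suffices to approximate a conditional expectation $\E[H({\bf Z}) \mid \Fc_{-K}] = g_K({\bf Z}_0,\ldots,{\bf Z}_{-K})$, with $g_K \in L^p(\R^{n(K+1)},\mu_K)$, up to $\varepsilon/2$, where $\mu_K$ is the law of $({\bf Z}_0,\ldots,{\bf Z}_{-K})$.

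The density step is where the trigonometric structure pays off. Since $\cos$ is a bounded non-constant activation function, \cite[Theorem~1]{Hornik1991} applies verbatim with $\activF=\cos$ and shows that finite sums $\sum_j \beta_j \cos(\bm{\alpha}_j \cdot {\bf y} - \theta_j)$ are dense in $L^p(\R^{n(K+1)},\mu_K)$ for the finite measure $\mu_K$; expanding each summand by the angle-addition identity $\cos(\bm{\alpha}_j\cdot{\bf y}-\theta_j)=\cos\theta_j\cos(\bm{\alpha}_j\cdot{\bf y})+\sin\theta_j\sin(\bm{\alpha}_j\cdot{\bf y})$ exhibits these approximants as trigonometric polynomials on $\R^{n(K+1)}$. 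In contrast with Proposition~\ref{prop:Nonlinear}, no exponential-moment hypothesis on ${\bf Z}$ is required, precisely because trigonometric polynomials are bounded. I would therefore fix a trigonometric polynomial $T$ in the $n(K+1)$ variables $({\bf z}_0,\ldots,{\bf z}_{-K})$ with $\|g_K - T\|_{L^p(\mu_K)} < \varepsilon/2$.

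The core and hardest step is the \emph{realization}: I must exhibit $T({\bf z}_0,\ldots,{\bf z}_{-K})$ as the value at time $0$ of a system \eqref{eq:RCTSASDet} with a linear readout. Using the product-to-sum formulas I would first write $T$ as a finite linear combination of pure products $\prod_{k=0}^{K} f_k({\bf z}_{-k})$, where each $f_k$ is a single $\cos$ or $\sin$ of a linear form in ${\bf z}_{-k}$, hence a scalar element of $\Trig_{1,1}$. Each such product I would realize by an $(K+1)$-dimensional cascade: set $Q({\bf z}) = (f_K({\bf z}),0,\ldots,0)^\top$ and let $P({\bf z})$ be the strictly sub-diagonal matrix with entries $P({\bf z})_{j,j-1} = f_{K-j}({\bf z})$ for $j=1,\ldots,K$, so that the recursion reads $x_t^{(0)} = f_K({\bf z}_t)$ and $x_t^{(j)} = f_{K-j}({\bf z}_t)\,x_{t-1}^{(j-1)}$. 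Unwinding the recursion from time $0$ backwards gives $x_0^{(K)} = \prod_{k=0}^{K} f_k({\bf z}_{-k})$, which the readout ${\bf W}^\top{\bf x}_0 = x_0^{(K)}$ extracts. Because $P({\bf z})$ is nilpotent, with $P({\bf z})^{K+1}=0$ uniformly in ${\bf z}$, the system has the echo state property and a unique finite-memory, causal and time-invariant solution, so the four-tuple lies in $\A$; and since $P,Q$ have trigonometric-polynomial entries the system is of type \eqref{eq:RCTSASDet}. A linear combination of such products is then obtained by stacking the individual cascades into a block-diagonal state-affine system and choosing ${\bf W}$ to form the corresponding linear combination, which preserves nilpotency, the echo state property, and membership in $\A$. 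The resulting functional satisfies $H^{P,Q}_{\bf W}({\bf Z}) = T({\bf Z}_0,\ldots,{\bf Z}_{-K})$ and, being bounded, lies in $L^p(\Omega,\Fc,\P)$.

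Finally, the triangle inequality applied to the two approximations yields $\| H({\bf Z}) - H^{P,Q}_{\bf W}({\bf Z}) \|_p < \varepsilon$, which establishes both the density of $\TSASFunc_{\bf Z}$ in $L^p(\Omega,\Fc_{-\infty},\P)$ and the particular statement \eqref{eq:approxTSAS}; the filter estimate \eqref{eq:approxPolReadoutfiltersas} under stationarity then follows from the identity \eqref{eq:equal norms}, exactly as at the end of Proposition~\ref{prop:Nonlinear}. I expect the realization step to be the main obstacle, since it is there that one must convert the trigonometric nonlinearity --- which in Proposition~\ref{prop:Nonlinear} was free to sit in the readout --- into the multiplicative time-coupling of a state-affine recursion, while simultaneously guaranteeing the echo state property through the nilpotency of the reservoir matrix.
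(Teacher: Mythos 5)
Your proof is correct, but it takes a genuinely different route from the paper's. The paper argues by duality: it first checks that $\TSASFunc_{{\bf Z}}$ is a linear subspace (via direct sums of systems), then invokes Hahn--Banach and Riesz representation to reduce density to showing that any $F\in L^q(\Omega,\Fc_{-\infty},\P)$ annihilating $\TSASFunc_{{\bf Z}}$ vanishes; an induction on $K$ using the nilpotent subdiagonal cascade shows that such an $F$ satisfies $\E\bigl[F\prod_{k}g_k({\bf Z}_{-k})\bigr]=0$ for all products of sines and cosines of linear forms, whence $\E\bigl[F\exp\bigl(i\sum_j\mathbf{u}_j\cdot{\bf Z}_{-j}\bigr)\bigr]=0$, and the uniqueness theorem for characteristic functions plus Lemma~\ref{lem:condExp} give $F=0$. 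You instead argue constructively: truncate to finite memory via Lemma~\ref{lem:condExp}, obtain $L^p(\mu_K)$-density of trigonometric polynomials from Hornik's Theorem~1 with $\activF=\cos$ (legitimate, since $\cos$ is bounded and non-constant, and indeed moment-free), decompose the approximant into pure products $\prod_k f_k({\bf z}_{-k})$ by product-to-sum identities, and realize each product with the \emph{same} nilpotent cascade the paper uses inside its induction --- only you use it to build the approximant directly rather than to generate test functionals against the dual element $F$. Your route is shorter and avoids the characteristic-function uniqueness theorem, at the cost of importing Hornik's $L^p$ theorem (whose own proof is a duality argument of the same flavour), and it still needs the block-diagonal stacking to handle linear combinations, which is exactly the paper's linear-subspace step. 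Two small points to tighten: the echo state property requires that \emph{mixed} products $P({\bf z}_0)\cdots P({\bf z}_K)$ of the subdiagonal matrices vanish for arbitrary, possibly distinct arguments (which holds by the structure of the $A_j$, cf.\ Lemma~\ref{lem:matrixLem}), not merely $P({\bf z})^{K+1}=0$ for a single ${\bf z}$; and your subdiagonal indexing $P({\bf z})_{j,j-1}=f_{K-j}({\bf z})$ should be stated with the state components indexed from $0$ to $K$ for the unwinding $x_0^{(K)}=\prod_{k=0}^K f_k({\bf z}_{-k})$ to come out as written.
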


\begin{proof}
We first argue that $\TSASFunc_{{\bf Z}}$ is a linear subspace of $L^p(\Omega,\Fc_{-\infty},\P)$. To do this we need to introduce some notation. Given $A \in \M_{N_1,M_1}$, $B \in M_{N_2,M_2}$, we denote by $A \oplus B \in \M_{N_1+N_2,M_1+M_2}$ the direct sum. Given $R$ as in  \eqref{eq:trigPolynomial} we define $R \oplus A \in \Trig_{N+N_1,M+M_1}$ by
\[ 
R \oplus A ({\bf z}):= \sum_{k=1}^r A_k \oplus A \cos(\mathbf{u}_k \cdot {\bf z}) + B_k \oplus A \sin(\mathbf{v}_k \cdot {\bf z}),
\]
and (with the analogous definition for $B \oplus R$) for $R_i \in \Trig_{N_i,M_i}$, $i=1,2$ we set 
\[ R_1 \oplus R_2 = R_1 \oplus \mathbf{0}_{N_2,M_2} + \mathbf{0}_{N_1,M_1} \oplus R_2. \]
One easily verifies that for $\lambda \in \R$ and $(N_i,{\bf W}_i,P_i,Q_i) \in \A$, $i=1,2$, one has that
\[ \begin{aligned} & (N_1+N_2,{\bf W}_1 \oplus \lambda {\bf W}_2, P_1 \oplus P_2, Q_1 \oplus Q_2 ) \in \A, \\
&  H^{P_1,Q_1}_{{\bf W}_1}({\bf Z}) + \lambda H^{P_2,Q_2}_{{\bf W}_2}({\bf Z}) = H^{P_1 \oplus P_2,Q_1 \oplus Q_2}_{{\bf W}_1 \oplus \lambda {\bf W}_2}({\bf Z}).
\end{aligned} \]
This shows that $\TSASFunc_{{\bf Z}}$ is indeed a linear subspace of $L^p(\Omega,\Fc_{-\infty},\P)$.

Secondly, in order to show that $\TSASFunc_{{\bf Z}}$ is dense in $L^p(\Omega,\Fc_{-\infty},\P)$, it suffices to prove that if $F \in L^q(\Omega,\Fc_{-\infty},\P)$ satisfies $\E[F H] = 0$ for all $H \in \TSASFunc_{{\bf Z}}$, then $F=0$, $\P$-almost surely. Here $q\in (1,\infty]$ is the H\"older conjugate exponent of $p$.
This can be shown by contraposition.
Suppose that $\TSASFunc_{{\bf Z}}$ is not dense in $L^p(\Omega,\Fc_{-\infty},\P)$. Since $\TSASFunc_{{\bf Z}}$ is a linear subspace, by the Hahn-Banach theorem there exists a bounded linear functional $\Lambda$ on $L^p(\Omega,\Fc_{-\infty},\P)$ such that $\Lambda(H)=0$ for all $H \in \TSASFunc_{{\bf Z}}$, but $\Lambda \neq 0$, see e.g.\ \cite[Theorem~5.19]{Rudin:real:analysis}. Then by \cite[Theorem~6.16]{Rudin:real:analysis} there exists $F \in L^q(\Omega,\Fc_{-\infty},\P)$ such that $\Lambda(H)=\E[F H]$ for all $H \in L^p(\Omega,\Fc_{-\infty},\P)$ and $F \neq 0$, since $\Lambda \neq 0$. In particular, there exists $F \in L^q(\Omega,\Fc_{-\infty},\P) \setminus \{0\}$ such that $\E[F H]=0$ for all $H \in \TSASFunc_{{\bf Z}}$. 

Thirdly, suppose that $F \in L^q(\Omega,\Fc_{-\infty},\P)$ satisfies 
\begin{equation} 
\label{eq:auxEq4} 
\E[F H] = 0 \text{ for all } H \in \TSASFunc_{{\bf Z}}. 
\end{equation} 
If we show that $F=0$, $\P$-almost surely, then the statement in the theorem follows by the argument in the second step. In order to prove that $F=0$, $\P$-almost surely, we first show that  \eqref{eq:auxEq4} implies the following statement: for any $K \in \N$, any subset $I \subset \mathcal{I}_K:= \{0,\ldots,K\}$, and any $\mathbf{u}_0,\ldots, \mathbf{u}_K \in \R^n$ it holds that 
\begin{equation} \label{eq:auxEq3} \E \left[ F \prod_{j \in I} \sin(\mathbf{u}_j \cdot {\bf Z}_j) \prod_{k \in \mathcal{I}_K \setminus I} \cos(\mathbf{u}_k \cdot {\bf Z}_k)  \right] = 0. \end{equation}
We prove this claim by induction on $K \in \N$. For $K=0$, one sets $Q_1({\bf z}):= \cos(\mathbf{u}_0 \cdot {\bf z})$ and $Q_2({\bf z}):= \sin(\mathbf{u}_0 \cdot {\bf z})$ and notices that
$(1,1,0,Q_i) \in \A $. Moreover, since the sine and cosine function are bounded, it is easy see that $Q_i({\bf Z}_0)= H^{0,Q_i}_1({\bf Z}_0) \in \TSASFunc_{{\bf Z}}$, for $i \in \{1,2\}$. Thus \eqref{eq:auxEq4} implies \eqref{eq:auxEq3} and so the statement holds for $K=0$. For the induction step, let $K \in \N \setminus \{0\}$ and assume the implication holds for $K-1$. We now fix $I$ and  $\mathbf{u}_0,\ldots, \mathbf{u}_K \in \R^n$ as above and prove \eqref{eq:auxEq3}. To simplify the notation we define for $k\in \{0,\ldots,K\}$ and ${\bf z} \in \R^n$ the function $g_k$ by
\[ g_k({\bf z}):= \begin{cases}  \sin(\mathbf{u}_k \cdot {\bf z}), \quad \text{ if } k \in I, \\  \cos(\mathbf{u}_k \cdot {\bf z}) , \quad \text{ if } k \in \mathcal{I}_K \setminus I. \end{cases} \]
To prove \eqref{eq:auxEq3}, we set $N:=K+1$, for $j\in \{1,\ldots,K\}$ define $A_j \in \M_N$ with all entries equal to $0$ except $(A_j)_{j+1,j}=1$, that is, $(A_{j})_{k,l}= \delta_{k,j+1} \delta_{l,j}$, $k,l \in \left\{1, \ldots, N\right\}$. Define now for ${\bf z} \in \R^n$
\begin{equation}
\label{eq:intermediate system}
\left\{
\begin{aligned} P({\bf z})&:=   \sum_{j=0}^{K-1} A_{K-j} g_j({\bf z}), \\
  Q({\bf z})&:=   e_1 g_K({\bf z}), \\
  {\bf W}&:=  e_{K+1}, 
  \end{aligned} 
 \right.
\end{equation}
where $e_j$ is the $j$-th unit vector in $\R^N$, that is, the only non-zero entry of $e_j$ is a $1$ in the $j$-th coordinate. By Lemma~\ref{lem:matrixLem} in the appendix, one has $A_{j_L} \cdots A_{j_0} =0$ for any $j_0,\ldots, j_L \in \{ 1,\ldots, K\}$ and $L\geq K$, since $j_L=j_0+L$ can not be satisfied. In other words, any product of more than $K$ factors of matrices $A^{(j)}$ is equal to $0$ and thus for any $L \in \N$ with $L \geq K$ and any ${\bf z}_0,\ldots,{\bf z}_L \in \R^n$ one has $P({\bf z}_0)\ldots P({\bf z}_L) = 0$. Using this fact and iterating \eqref{eq:RCTSASDet}, one obtains that the trigonometric state-affine system defined by the elements in \eqref{eq:intermediate system} has a unique solution given by
\begin{equation}
\label{eq:integration sas nilpotent}
\mathbf{x}_t = Q({\bf z}_t) + \sum_{j=1}^K P({\bf z}_t) \cdots P({\bf z}_{t-j+1}) Q({\bf z}_{t-j}).  
\end{equation}
In particular $(N,{\bf W},P,Q) \in \A$ and 
\begin{multline}
\label{eq:expression hwpq for induction}
H^{P,Q}_{{\bf W}}({\bf Z}) 
= \mathbf{X} _0\\
= {\bf W}^{\top}\left(Q({\bf Z}_0) + \sum_{j=1}^K P({\bf Z}_0) \cdots P({\bf Z}_{-j+1}) Q({\bf Z}_{-j})\right).
\end{multline}
The finiteness of the sum in \eqref{eq:expression hwpq for induction} and the boundedness of the trigonometric polynomials implies that
$H^{P,Q}_{{\bf W}}({\bf Z}) \in  \TSASFunc_{{\bf Z}} $.

We conclude the proof of the induction step with the following chain of equalities that
uses \eqref{eq:auxEq4} in the first one, the representation \eqref{eq:expression hwpq for induction} in the second one, and the choice of the vector ${\bf W} $ and the induction hypothesis in the last step:
\begin{equation} \label{eq:auxEq6} \begin{aligned} 0 & = \E[F H^{P,Q}_{\bf W}({\bf Z})] \\ 
& = \E[F {\bf W}^\top Q({\bf Z}_0)] \\ & \quad \quad +  \E[F {\bf W}^\top \sum_{j=1}^K P({\bf Z}_0) \cdots P({\bf Z}_{-j+1}) Q({\bf Z}_{-j})  ]  \\ 
& = \E[F {\bf W}^\top  P({\bf Z}_0) \cdots P({\bf Z}_{-K+1}) Q({\bf Z}_{-K})  ]. \end{aligned} \end{equation}
However, again by Lemma~\ref{lem:matrixLem} in the appendix, the only non-zero product of matrices $A_{j_{K-1}} \cdots A_{j_0}$ for $j_0, \ldots j_{K-1} \in \{1,\ldots, K\}$ takes place when $j_k = k+1$ for $k \in \{0,\ldots, K-1\}$. Therefore: 
\[ \begin{aligned} & P({\bf Z}_0)  \cdots P({\bf Z}_{-K+1}) \\ & = A_{K} g_0({\bf Z}_0) A_{K-1} g_1({\bf Z}_{-1}) \cdots A_{1} g_{K-1}({\bf Z}_{-K+1}). \end{aligned} \]
Combining this with \eqref{eq:auxEq6} and using the identity \eqref{eq:auxEq5} in Lemma~\ref{lem:matrixLem} in the appendix one obtains 
\[\begin{aligned} 0 & = \E[F e_{K+1}^\top A_{K} \cdots A_{1} e_1  \prod_{k=0}^K g_k({\bf Z}_{-k}) ] \\ & = \E[F \prod_{k=0}^K g_k({\bf Z}_{-k}) ], \end{aligned} \] 
which is the same as \eqref{eq:auxEq3}.

Fourthly, by standard trigonometric identities, the identity
\eqref{eq:auxEq3} established in the third step implies that for any $K \in \N$, 
\begin{equation}\label{eq:auxEq7} \E\left[ F \exp\left(i \sum_{j=0}^K \mathbf{u}_j \cdot {\bf Z}_j\right) \right]  = 0 \text{ for all } \mathbf{u}_0,\ldots, \mathbf{u}_K \in \R^n. \end{equation}
We claim that \eqref{eq:auxEq7} implies $F=0$, $\P$-almost surely and hence the statement in the theorem follows. This fact is a consequence of the \textit{uniqueness} theorem for characteristic functions (which is ultimately a consequence of the Stone-Weierstrass approximation theorem). See for instance \cite[Theorem~4.3]{Kallenberg2002} and the text below that result. To prove $F=0$, $\P$-almost surely, we denote by $F^+$ and $F^-$ the positive and negative parts of $F$. Then by \eqref{eq:auxEq7} one has $\E[F] = 0$, necessarily. Thus, if it does not hold that $F = 0$, $\P$-almost surely, then $c:=\E[F^+]=\E[F^-] > 0$ and one may define probability measures $\Q^+$ and $\Q^-$ on $(\Omega,\Fc)$ by setting $\Q^+(A):=c^{-1} \E[F^+ \mathbbm{1}_A]$ and $\Q^-(A):=c^{-1} \E[F^- \mathbbm{1}_A]$ for $A \in \Fc$. 
Denote by $\mu_K^+$ and $\mu_K^-$ the law in $\R^{n(K+1)} $ of the random variable
\[ \boldsymbol{{\cal Z}}_K:= ( {\bf Z}_0^{\top},{\bf Z}_{-1}^{\top},\ldots,{\bf Z}_{-K}^{\top})^{\top}\]
under $\Q^+$ and $\Q^-$. Then, the statement \eqref{eq:auxEq7} implies that for all $u \in \R^{n(K+1)}$,
\[ \int_{\R^{n(K+1)}} \exp(i u \cdot z) \mu_K^+(\d z) =  \int_{\R^{n(K+1)}} \exp(i u \cdot {\bf z}) \mu_K^-(\d {\bf z}). \]
By the uniqueness theorem for characteristic functions (see e.g.\ \cite[Theorem~4.3]{Kallenberg2002} and the text below) this implies that $\mu_K^+ = \mu_K^-$. Translating this statement back to random variables, this means that for any bounded and measurable function $g \colon \R^{n(K+1)} \to \R$ one has
\[ \begin{aligned} 0  = c \E_{\Q^+}[g(\boldsymbol{{\cal Z}}_K)] - c\E_{\Q^-}[g(\boldsymbol{{\cal Z}}_K)]
  = \E[F g(\boldsymbol{{\cal Z}}_K)],
 \end{aligned} \]
 which, by definition, means that $\E[F|\Fc_{-K}]=0$, $\P$-almost surely.
 Since $K \in \N$ was arbitrary and $F \in L^1(\Omega,\Fc_{-\infty},\P)$, one may combine this with $\lim_{t \to -\infty} \E[F|\Fc_t] = F$, $\P$-almost surely\ (see Lemma~\ref{lem:condExp}) to conclude $F=0$, as desired. % one sees that  and thus particular, it suffices to show that for all $t \in \N$, $\E[F|\Fc_t]=0$, $\P$-almost surely
 
The statement in \eqref{eq:approxPolReadoutfiltersas} in the presence of the stationarity hypothesis for $\mathbf{Z} $ is a straightforward consequence of \eqref{eq:mutequalsmuz} and the equality \eqref{eq:equal norms}.
\end{proof}

We emphasize that the use in the proof of the theorem of nilpotent matrices of the type introduced in Lemma~\ref{lem:matrixLem} ensures that the the echo state property is automatically satisfied (see \eqref{eq:integration sas nilpotent}). 

%\begin{corollary}
%Let $U$ be a causal, time-invariant and measurable filter such that $U({\bf Z})$ is $p$-integrable. Then for any $\varepsilon > 0$ there is a trigonometric SAS \eqref{eq:RCTSASDet} which satisfies the echo state property and with associated filter $U^{\mathrm{SAS}}$ measurable, $p$-integrable and satisfying 
%\[ \| U({\bf Z}) - U^{\mathrm{SAS}}({\bf Z}) \| < \varepsilon \]
%\end{corollary}
%
%\begin{proof}
%Denote by $H_U \colon (\R^n)^{\Z_-} \to \R$ the functional associated to $U$. By Corollary~\ref{cor:TSAS} applied to $\mu:= \bar{\mu}$ as defined in \eqref{eq:mubar}, there exists 
%$N \in \N$, ${\bf W} \in \R^N$, $P \in \Trig_{N,N}$ and $Q \in \Trig_{N,1}$ such that \eqref{eq:RCTSASDet} has the echo state property, $H^{P,Q}_{\bf W} \in L^p((\R^n)^{\Z_-},\mu)$ and \eqref{eq:approxTSASPathspace} holds. Denote by $U^{\mathrm{SAS}}$ the filter associated to \eqref{eq:RCTSASDet}. Since $H^{P,Q}_{\bf W} \in L^p((\R^n)^{\Z_-},\mu)$ and \eqref{eq:filterNormFunctional}, it follows that $\|U^{\mathrm{SAS}}({\bf Z})\| < \infty$, i.e. $U^{\mathrm{SAS}}$ is $p$-integrable. Finally, combining \eqref{eq:filterNormFunctional} and \eqref{eq:approxTSASPathspace} one obtains
%\[ \| U({\bf Z}) - U^{\mathrm{SAS}}({\bf Z}) \| = \|H_U - H^{P,Q}_{\bf W} \|_{L^p((\R^n)^{\Z_-},\bar{\mu})} < \varepsilon, \]
%as desired.
%\end{proof}

\subsection{Echo state networks}

We now turn to showing the universality in the $L ^p$ sense of the  the most widely used reservoir  systems with linear readouts, namely,  echo state networks. An echo state network is a RC system determined by
\begin{equation} \label{eq:RCESNDet}
\left\{
\begin{aligned}
 \mathbf{x}_t &  = \activF( A \mathbf{x}_{t-1} + C {\bf z}_t + \bm{\zeta}), \\
y_t & = {\bf W}^\top \mathbf{x}_t,
\end{aligned}
\right.
\end{equation}
for $A \in \M_{N}$, $C \in M_{N,n}$, $\bm{\zeta} \in \R^N$, and ${\bf W} \in \R^N$. As it is customary in the neural networks literature, the map $\activF: \mathbb{R}^N\to \mathbb{R}^N$ is obtained via the componentwise application of a given activation function $\activF \colon \R \to \R$ that is denoted with the same symbol.

If this system has the echo state property and the resulting filter is causal and time-invariant, we write as $H^{A,C,\bm{\zeta}}_{\bf W}({\bf z}):= y_0$ the associated functional.

\begin{theorem}\label{thm:ESN} Fix $p \in [1,\infty)$, let
${\bf Z} $ be a fixed $\mathbb{R}^n $-valued input process, and let $H$ be a functional such that $H(\mathbf{Z}) \in L^p(\Omega,\Fc,\P)$. Suppose that the activation function $\activF \colon \R \to \R$ is  non-constant, continuous, and has a bounded image. Then for any $\varepsilon > 0$, there exists $N \in \N$, $C \in \M_{N,n}$, $\bm{\zeta} \in \R^N$, $A \in \M_{N}$, ${\bf W} \in \R^{N}$ such that \eqref{eq:RCESNDet} has the echo state property, the corresponding filter is causal and time-invariant, the associated functional satisfies $H^{A,C,\bm{\zeta}}_{\bf W}({\bf Z}) \in L^p(\Omega,\Fc,\P)$ and
\begin{equation}\label{eq:approxESN}
\| H({\bf Z}) - H^{A,C,\bm{\zeta}}_{\bf W}({\bf Z}) \|_p < \varepsilon.
\end{equation}
\end{theorem}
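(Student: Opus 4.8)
The plan is to reduce the statement, exactly as for Proposition~\ref{prop:Nonlinear} and Corollary~\ref{cor:nn}, to the approximation of a function of finitely many inputs, and then to realize that function as the functional of an echo state network with a \emph{linear} readout. The essential new difficulty, compared with Corollary~\ref{cor:nn}, is that the nonlinearity now sits \emph{inside} the reservoir, so past inputs can no longer be stored by a linear shift register: they must be transported through the saturating map $\activF$. The two features that make this possible are that $\activF$ has bounded image (which is why, unlike in Proposition~\ref{prop:Nonlinear}, no moment hypothesis on $\mathbf{Z}$ is needed) and that, being bounded and non-constant, $\activF$ is in particular not a polynomial, so one-hidden-layer networks built from $\activF$ are universal approximators both in $L^p$ and uniformly on compact sets \cite[Theorem~1]{Hornik1991}.

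First I would fix $\varepsilon>0$ and, by Lemma~\ref{lem:condExp}, choose $K\in\N$ with $\|H(\mathbf{Z})-\E[H(\mathbf{Z})\mid\Fc_{-K}]\|_p<\varepsilon/3$; writing $\E[H(\mathbf{Z})\mid\Fc_{-K}]=g_K(\mathbf{Z}_0,\ldots,\mathbf{Z}_{-K})$ for a measurable $g_K\in L^p(\R^{n(K+1)},\mu_K)$ reduces everything to approximating $g_K$ in $L^p(\mu_K)$. To keep every quantity built from the \emph{past} inputs bounded, and hence compatible with the bounded image of $\activF$, I would precompose the past coordinates with a fixed bounded homeomorphism $\tau\colon\R\to\R$ and apply \cite[Theorem~1]{Hornik1991} to $g_K$ seen as an element of $L^p$ in the variables $(\mathbf{Z}_0,\tau(\mathbf{Z}_{-1}),\ldots,\tau(\mathbf{Z}_{-K}))$, whose joint law is a finite measure. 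This produces
\[
\Big\|g_K-\sum_{j=1}^{k}\beta_j\,\activF\big(\bm\gamma_{j,0}\!\cdot\!\mathbf{Z}_0+\textstyle\sum_{s=1}^{K}\bm\gamma_{j,s}\!\cdot\!\tau(\mathbf{Z}_{-s})-\theta_j\big)\Big\|_{L^p(\mu_K)}<\varepsilon/3,
\]
in which the current input $\mathbf{Z}_0$ enters linearly while every past input enters only through the bounded quantities $\tau(\mathbf{Z}_{-s})$.

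Second I would assemble the network in blocks, using a strictly lower-triangular and therefore nilpotent matrix $A$, so that, exactly as in the nilpotent construction in the proof of Theorem~\ref{thm:TSAS}, the system has finite memory, the echo state property holds automatically with a unique causal and time-invariant solution, and boundedness of $\activF$ gives $H^{A,C,\bm{\zeta}}_{\bf W}(\mathbf{Z})\in L^p(\Omega,\Fc,\P)$. One block computes $\tau(\mathbf{Z}_t)$ at each time $t$ as a short $\activF$-network of the current input, and a feedforward chain of further blocks transports these bounded values forward in time. The last block, at time $0$, carries the feature neurons $\activF(\bm\gamma_{j,0}\!\cdot\!\mathbf{Z}_0+\sum_{s\ge1}\bm\gamma_{j,s}\!\cdot\!(\text{stored }\tau(\mathbf{Z}_{-s}))-\theta_j)$, and the readout ${\bf W}$ forms the linear combination with coefficients $\beta_j$.

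I expect the \textbf{main obstacle} to be this transport step: a single reservoir update applies $\activF$ only once, so moving a stored value from one time step to the next distorts it, and for a merely continuous (possibly nowhere-differentiable) $\activF$ there is no usable local linearity to exploit. The resolution is to store each bounded value $v=\tau(\mathbf{Z}_{-s})$ redundantly, so that a fixed linear functional of the block recovers it, and to \emph{re-encode} it at every step by approximating the identity on the relevant compact interval uniformly by an $\activF$-network, which is possible precisely because $\activF$ is non-polynomial. Taking the per-step approximant with near-unit Lipschitz constant prevents errors from amplifying along the chain, whose length is at most $K$, so the total storage error stays uniformly small; the induced $L^p(\mu_K)$ perturbation of the feature neurons is then controlled by first discarding a set of small $\mu_K$-probability on which $\mathbf{Z}_0$ is large (where $\activF$ is bounded) and using the uniform continuity of $\activF$ on the resulting compact argument set. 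Choosing all tolerances so that this contributes at most $\varepsilon/3$ and collecting the three error terms by the triangle inequality yields \eqref{eq:approxESN}.
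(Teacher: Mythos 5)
Your proposal follows essentially the same route as the paper's proof: reduce to a finite-window approximation via Lemma~\ref{lem:condExp} and a one-hidden-layer $\activF$-network (Corollary~\ref{cor:nn}), then realize that network as a block-structured echo state network with nilpotent feedforward coupling in which past inputs are shuttled forward through subnetworks approximating the identity; the nilpotent structure yields the echo state property automatically and the boundedness of $\activF$ yields $p$-integrability. The one step that does not hold as written is your error-propagation argument: you ask for a per-step identity approximant ``with near-unit Lipschitz constant,'' but for a merely continuous bounded $\activF$ (e.g.\ nowhere differentiable) a network $\sum_j \beta_j\,\activF(\bm{\alpha}_j\cdot{\bf x}-\theta_j)$ need not be Lipschitz at all, so this cannot be arranged in general. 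The gap is reparable in two ways. The paper's way: transport the raw, unbounded inputs, let the identity approximant $h^{(m)}$ (accurate to $1/m$ on the cube $B_m$) vary with $m$, prove by induction that the compositions $[h^{(m)}]^j({\bf x})\to{\bf x}$ pointwise as $m\to\infty$, and conclude by dominated convergence using only the boundedness of the outer $\activF$ and of the readout weights --- no Lipschitz or uniform control is needed. Your way, made correct: since you precompose with a bounded homeomorphism $\tau$, every transported value lies in the fixed compact set $\overline{\tau(\R)}^{\,n}$, so choosing the approximant within $\delta$ of the identity uniformly on a $K\delta$-enlargement of that set gives, by the triangle inequality and induction, an accumulated error of at most $j\delta\le K\delta$ after $j$ compositions, with no Lipschitz constant entering. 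This $\tau$-trick is a genuine (minor) variant the paper does not use --- it trades the paper's qualitative dominated-convergence limit for a uniform quantitative bound --- but it obliges you to also approximate $\tau$ itself by an $\activF$-network, which is only possible uniformly on compacta, so the tail where some $|{\bf Z}_{-s}|$ is large must be absorbed into the same small-$\mu_K$-probability set you already discard for ${\bf Z}_0$.
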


\begin{proof}
First, by Corollary~\ref{cor:nn} and \eqref{eq:auxEq2} there exists $K, \overline{N} \in \N$, $\overline{{\bf W}} \in \R^{\overline{N}}$, $\overline{A} \in \M_{\overline{N},n(K+1)}$, and $\overline{\bm{\zeta}} \in \R^{\overline{N}}$ such that the neural network  
\[ h({\bf z}) = \overline{{\bf W}}^\top \activF(\overline{A} {\bf z} + \overline{\bm{\zeta}}) \]
satisfies
\begin{equation} \label{eq:auxEq10} \| H({\bf Z}) - h({\bf Z}_0^\top,\ldots,{\bf Z}_{-K}^\top) \|_p < \frac{\varepsilon}{2}. \end{equation}
Notice that we may rewrite $\overline{A}$ as
\[ \overline{A} = [A^{(0)} A^{(-1)} \cdots A^{(-K)}] \]
with $A^{(j)} \in \M_{\overline{N},n}$ and 
\begin{equation}
\label{def hinfinity}
\begin{aligned} H_\infty({\bf Z}): & = h({\bf Z}_0^\top,\ldots,{\bf Z}_{-K}^\top) \\ & = \overline{{\bf W}}^\top \activF \left( \sum_{j=0}^K A^{(-j)} {\bf Z}_{-j} + \overline{\bm{\zeta}} \right). \end{aligned}
\end{equation}

Second, by the neural network approximation theorem for continuous functions \cite[Theorem~2]{Hornik1991}, for any $m \in \N$ there exists a neural network that uniformly approximates the identity mapping on the hypercube $B_m:=\{ { \bf x} \in \R^{n} \, : \, |x_i| \leq m \text{ for } i=1,\ldots,n \}$. More specifically, \cite[Theorem~2]{Hornik1991} is formulated for $\R$-valued mappings and we hence apply it componentwise: for any $m \in \N$ and $i=1,\ldots,n$ there exists $N_i^{(m)} \in \N$, ${\bf W}_i^{(m)} \in \R^{N_i^{(m)}}$, $\overline{A}_i^{(m)} \in \M_{N_i^{(m)},n}$, and $\bm{\zeta}_i^{(m)} \in \R^{N_i^{(m)}}$, such that for all $i=1,\ldots,n$ the neural network 
\[ h_i^{(m)}({ \bf x})= \left({\bf W}_i^{(m)}\right)^\top \activF\left(\overline{A}_i^{(m)} { \bf x} + \bm{\zeta}_i^{(m)}\right) \]
satisfies 
\begin{equation} \label{eq:auxEq8} \sup_{{\bf x} \in B_m} |h_i^{(m)}({\bf x}) -x_i| < \frac{1}{m}. \end{equation}
Write $h^{(m)}({\bf x})=(h_1^{(m)}({\bf x}),\ldots,h_n^{(m)}({\bf x}))^\top$ and for $j=1,\ldots, K$, denote by $[h^{(m)}]^j= h^{(m)} \circ \cdots \circ h^{(m)} $ the $j$th composition of $h^{(m)}$. We now claim that for all $j=1,\ldots, K$ and ${\bf x} \in \R^n$ it holds that 
\begin{equation}\label{eq:hmconv} \lim_{m \to \infty} [h^{(m)}]^j({\bf x}) = {\bf x}. \end{equation}
Indeed, let us fix ${\bf x} \in \R^n$ and argue by induction on $j$. To prove \eqref{eq:hmconv} for $j=1$, let $\overline{\varepsilon} > 0$ be given and choose $m_0 \in \N$ satisfying $m_0>\max\left\{|x_1|,\ldots,|x_n|,1/\overline{\varepsilon}\right\}$. Then, for any $m \geq m_0$ one has ${\bf x} \in B_m$ by definition and \eqref{eq:auxEq8} implies that for $i=1,\ldots,n$,
\[ |h_i^{(m)}({\bf x}) -x_i| < \frac{1}{m} < \overline{\varepsilon}. \]
Hence \eqref{eq:hmconv} indeed holds for $j=1$. Now let $j \geq 2$ and assume that \eqref{eq:hmconv} has been proved for $j-1$. Define $\overline{{\bf x}}^{(m)}:= [h^{(m)}]^{j-1}({\bf x})$. Then,  by the induction hypothesis,  for any given $\overline{\varepsilon} > 0$ one finds $m_0 \in \N$ such that for all $m \geq m_0$ and $i=1,\ldots, n$ it holds that 
\begin{equation} \label{eq:auxEq9} | \overline{ x}^{(m)}_i - x_i | < \frac{\overline{\varepsilon}}{2}. \end{equation}
Hence, choosing $\overline{m}_0 \in \N$ with
$\overline{m}_0>\max(m_0,|x_1|+\frac{\overline{\varepsilon}}{2},\ldots,|x_n|+\frac{\overline{\varepsilon}}{2},2/\overline{\varepsilon})$ one obtains from the triangle inequality and \eqref{eq:auxEq9} that $\overline{{\bf x}}^{(m)} \in B_{\overline{m}_0}$ for all $m \geq \overline{m}_0$. In particular for any $m \geq \overline{m}_0$ one may use the triangle inequality in the first step, $\overline{{\bf x}}^{(m)} \in B_{\overline{m}_0} \subset B_m$ and \eqref{eq:auxEq9} in the second step and \eqref{eq:auxEq8} in the last step to estimate 
\[\begin{aligned} |[h^{(m)}]^j_i({\bf x})-x_i| & \leq |h_i^{(m)}(\overline{{\bf x}}^{(m)})-\overline{x}^{(m)}_i| + |\overline{x}^{(m)}_i - x_i| \\ & \leq \sup_{{\bf y} \in B_m} |h_i^{(m)}({\bf y})-y_i| + \frac{\overline{\varepsilon}}{2} \\ & < \frac{1}{m} + \frac{\overline{\varepsilon}}{2} < \overline{\varepsilon}. \end{aligned} \]
This proves \eqref{eq:hmconv} for all $j=1,\ldots,K$.

Thirdly, define
\[ H_m({\bf Z}):=  \overline{{\bf W}}^\top \activF \left( \sum_{j=0}^K A^{(-j)} [h^{(m)}]^j({\bf Z}_{-j}) + \overline{\bm{\zeta}} \right) \]
with the convention $[h^{(m)}]^0({\bf x}) = {\bf x}$. 

Since $\activF$ is continuous, \eqref{eq:hmconv} implies that $\lim_{m \to \infty} H_m({\bf Z}) = H_\infty({\bf Z})$, $\P$-almost surely, where $H_\infty$ was defined in \eqref{def hinfinity}. Furthermore, by assumption there exists $C > 0$ such that $|\activF(x)| \leq C$ for all $x \in \R$. Hence one has $|H_\infty({\bf Z})-H_m({\bf Z})|^p \leq (2 C \sum_{i=1}^{\overline{N}}
|\overline{W}_i|)^p$ for all $m \in \N$. Thus one may apply the dominated convergence theorem to obtain
\[ \begin{aligned} & \lim_{m \to \infty} \|H_\infty({\bf Z})-H_m({\bf Z})\|_p \\ =  & \lim_{m \to \infty} \E[|H_\infty({\bf Z})-H_m({\bf Z})|^p]^{1/p} = 0 . \end{aligned} \]
In particular for $m \in \N$ large enough one has $\|H_\infty({\bf Z})-H_m({\bf Z})\|_p < \frac{\varepsilon}{2}$ and combining this with the triangle inequality and \eqref{eq:auxEq10} one obtains 
\begin{equation}\label{eq:auxEq11} \begin{aligned} \|H({\bf Z})- H_m({\bf Z})\|_p & \leq \|H({\bf Z})-H_\infty({\bf Z})\|_p \\ & \quad +\|H_\infty({\bf Z})-H_m({\bf Z})\|_p < \varepsilon. \end{aligned} \end{equation}

To conclude the proof we now fix $m \in \N$ large enough (so that \eqref{eq:auxEq11} holds) and show that $H_m({\bf Z})= H^{A,C,\bm{\zeta}}_{\bf W}({\bf Z})$
for suitable choices of $A,C,\bm{\zeta}$ and ${\bf W}$. To do so, first define $N_J:=N_1^{(m)}+\cdots+N_n^{(m)}$ and the block matrices
\[ W_J:=  \begin{pmatrix}
  ({\bf W}_1^{(m)})^\top &\multicolumn{2}{c}{\text{\kern0.5em\smash{\raisebox{-1ex}{\LARGE 0}}}}\\
    &  \ddots &   \\
\multicolumn{2}{c}{\text{\kern-2.0em\smash{\raisebox{-.5ex}{\LARGE 0}}}}& ({\bf W}_n^{(m)})^\top
 \end{pmatrix} \in \M_{n,N_J}, \]
\[  \bm{\zeta}_J:=  \begin{pmatrix}
  \bm{\zeta}_1^{(m)} \\
   \vdots    \\
   \bm{\zeta}_n^{(m)}
 \end{pmatrix} \in \R^{N_J}, \text{ and } A_J :=  \begin{pmatrix}
  \overline{A}_1^{(m)} \\
   \vdots    \\
   \overline{A}_n^{(m)}
 \end{pmatrix} \in \M_{N_J,n}. \]
Furthermore, to emphasize that $m$ is fixed and $h^{(m)}$ approximates the identity, set $J({\bf x}):=h^{(m)}({\bf x})$ and note that
\begin{equation}\label{eq:IDef} J({\bf x})= W_J \activF(A_J {\bf x} + \bm{\zeta}_J) .\end{equation}
Now set $N:=K N_J + \overline{N}$ and define the block matrix $A \in \M_{N}$ by
\[\! \! \! \! \! \! \! \! \! \! \! \! \! \! \! \! \! \! \! \! \! \! 
A = \begin{pmatrix}
  \boldsymbol{0}_{N_J,N_J} & & & \\
   A_J W_J & \boldsymbol{0}_{N_J,N_J} & & \\
    & A_J W_J & \ddots &   \multicolumn{1}{c}{\text{\kern 4em\smash{\raisebox{3ex}{\Huge 0}}}}\\
  \multicolumn{1}{c}{\text{\kern1.5em\smash{\raisebox{1ex}{\Huge 0}}}} &  & \ddots & \boldsymbol{0}_{N_J,N_J} \\
   &  &   & A_J W_J & \boldsymbol{0}_{N_J,N_J} \\
   A^{(-1)} W_J & A^{(-2)} W_J & \cdots & \cdots & A^{(-K)} W_J & \boldsymbol{0}_{\overline{N},\overline{N}}
 \end{pmatrix} \] 
and $\bm{\zeta} \in \R^{N}$, $C \in \M_{N,n}$ and ${\bf W} \in \R^N$ by
\[  \bm{\zeta}:=  \begin{pmatrix}
  \bm{\zeta}_J \\
   \vdots    \\
   \bm{\zeta}_J \\
   \overline{\bm{\zeta}}
 \end{pmatrix}, \quad  C:=  \begin{pmatrix}
  A_J \\
  \boldsymbol{0} \\ 
   \vdots    \\
  \boldsymbol{0} \\   
   A^{(0)}
 \end{pmatrix}, \text{ and } {\bf W}:=  \begin{pmatrix}
  \boldsymbol{0}_{K N_J,1} \\ \overline{{\bf W}} 
 \end{pmatrix}.\] 
 Furthermore, we partition the reservoir states $\mathbf{x}_t$ of the corresponding echo state system as
\[ \mathbf{x}_t:=  \begin{pmatrix}
  \overline{\mathbf{x}}_t^{(1)} \\ 
   \vdots    \\
  \overline{\mathbf{x}}_t^{(K+1)}
 \end{pmatrix}, \] 
 with $\overline{\mathbf{x}}_t^{(j)} \in \R^{N_J}$, for $j \leq K$, and $\overline{\mathbf{x}}_t^{(K+1)} \in \R^{\overline{N}}$. With this notation for $\mathbf{x}_t$ and these choices of matrices, the recursions associated to the echo state reservoir map in \eqref{eq:RCESNDet} read as
 \begin{align} \label{eq:ESNNilpotFirst} \overline{\mathbf{x}}_t^{(1)} & =  \activF(A_J {\bf z}_t + \bm{\zeta}_J), \\ 
 \label{eq:ESNNilpot} \overline{\mathbf{x}}_t^{(j)} & = \activF(A_J W_J \overline{\mathbf{x}}_{t-1}^{(j-1)} + \bm{\zeta}_J), \text{ for } j=2,\ldots,K, \\
 \label{eq:ESNNilpotLast}
\overline{\mathbf{x}}_t^{(K+1)} & = \activF(\sum_{j=1}^K A^{(-j)} W_J \overline{\mathbf{x}}_{t-1}^{(j)} + A^{(0)} {\bf z}_t + \overline{\bm{\zeta}}).
\end{align} 
By iteratively inserting \eqref{eq:ESNNilpot} into itself and using \eqref{eq:ESNNilpotFirst} one obtains (recall the definition of $J$ in \eqref{eq:IDef}) that the unique solution to \eqref{eq:ESNNilpot} is given by
\begin{equation}\label{eq:ESNRecursionSolved} \overline{\mathbf{x}}_t^{(j)} = \activF(A_J [J]^{j-1}({\bf z}_{t-j+1}) + \bm{\zeta}_J). \end{equation}
More formally, one uses induction on $j$: For $j=1$ the two expressions \eqref{eq:ESNRecursionSolved} and \eqref{eq:ESNNilpotFirst} coincide. For $j=2,\ldots,K$ one inserts \eqref{eq:ESNRecursionSolved} for $j-1$ (which holds by induction hypothesis) into \eqref{eq:ESNNilpot} to obtain
\[\begin{aligned} \overline{\mathbf{x}}_t^{(j)}  & = \activF(A_J W_J \activF(A_J [J]^{j-2}({\bf z}_{t-j+1}) + \bm{\zeta}_J) + \bm{\zeta}_J) \\ & =  \activF(A_J [J]^{j-1}({\bf z}_{t-j+1}) + \bm{\zeta}_J), \end{aligned} \]
which is indeed \eqref{eq:ESNRecursionSolved}. Finally, combining \eqref{eq:ESNRecursionSolved} and \eqref{eq:ESNNilpotLast} one obtains
\[\begin{aligned} y_t = \overline{{\bf W}}^\top & \overline{\mathbf{x}}_t^{(K+1)}  = \overline{{\bf W}}^\top \activF(\sum_{j=1}^K A^{(-j)} W_J \overline{\mathbf{x}}_{t-1}^{(j)} + A^{(0)} {\bf z}_t + \overline{\bm{\zeta}})
\\ & = \overline{{\bf W}}^\top \activF(\sum_{j=1}^K A^{(-j)} [J]^{j}({\bf z}_{t-j}) + A^{(0)} {\bf z}_t + \overline{\bm{\zeta}}).
 \end{aligned} \]
The statement \eqref{eq:ESNRecursionSolved} shows, in particular, that the echo state network associated to  $A,C,\bm{\zeta}$ and ${\bf W}$ satisfies the echo state property. Moreover,
inserting $t=0$ in the previous equality and comparing with the definition of $H_m({\bf Z})$ one sees that indeed $H_m({\bf Z})= H^{A,C,\bm{\zeta}}_{\bf W}({\bf Z})$. The approximation statement \eqref{eq:approxESN} therefore follows from \eqref{eq:auxEq11}.
\end{proof}

\subsection{An alternative viewpoint} 
So far all the universality results have been formulated for functionals and filters with random inputs. Equivalently, we may formulate them as $L^p$-approximation results on the sequence space $(\R^n)^{\Z_-}$ endowed with any measure $\mu $ that makes $p$-integrable the filter that we want to approximate.

\begin{theorem}\label{thm:sum} Let $H \colon (\R^n)^{\Z_-} \to \R$ be a measurable functional. Then, for any probability measure $\mu$ on $(\R^n)^{\Z_-}$ with $H \in L^p((\R^n)^{\Z_-},\mu)$ and any $\varepsilon > 0$ there exists  a reservoir system that has the echo state property and such that the corresponding filter is causal and time-invariant, the associated functional $H^{\text{RC}} $ satisfies that $H^{\text{RC}} \in L^p((\R^n)^{\Z_-},\mu)$ and
\begin{equation}\label{eq:approxPathspace}
\| H - H^{\text{RC}} \|_{L^p((\R^n)^{\Z_-},\mu)} < \varepsilon.
\end{equation}
The reservoir functional $H^{\text{RC}} $ may be chosen as coming from any of the following systems:
\begin{itemize}
\item Linear reservoir with polynomial readout, that is, \eqref{eq:RCPolReadoutDet} for some $N \in \N$, $A \in \M_N$, ${\bf c} \in \M_{N,n}$, and a polynomial $h \in \Pol_N$, if the measure $\mu$ satisfies the following condition: for any $K \in \N$, 
 \[\int_{(\R^n)^{\Z_-}} \exp\left(\alpha \sum_{k=0}^K \sum_{i=1}^n |z^{(i)}_{-k}|\right) \mu(\d z) < \infty. \]
\item Linear reservoir with neural network readout, that is, \eqref{eq:RCPolReadoutDet} for some $N \in \N$, $A \in \M_N$, ${\bf c} \in \M_{N,n}$, and a neural network $h \in \Hc_N$.
\item Trigonometric state-affine system with linear readout, that is, \eqref{eq:RCTSASDet} for some $N \in \N$, ${\bf W} \in \R^N$, $P \in \Trig_{N,N}$ and $Q \in \Trig_{N,1}$.
\item Echo state network with linear readout, that is, \eqref{eq:RCESNDet} for some $N \in \N$, $C \in \M_{N,n}$, $\bm{\zeta} \in \R^N$, $A \in \M_{N}$, ${\bf W} \in \R^{N}$, where we assume that $\activF\colon \R \to \R$ employed in \eqref{eq:RCESNDet} is bounded, continuous and non-constant. 
\end{itemize}
\end{theorem}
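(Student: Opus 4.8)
The plan is to recognize that this theorem is simply a reformulation of Proposition~\ref{prop:Nonlinear}, Corollary~\ref{cor:nn}, Theorem~\ref{thm:TSAS}, and Theorem~\ref{thm:ESN} through the canonical construction described in the preliminaries. Given the prescribed probability measure $\mu$ on $(\R^n)^{\Z_-}$, I would set $\Omega := (\R^n)^{\Z_-}$, $\Fc := \otimes_{t \in \Z_-} \mathcal{B}(\R^n)$, $\P := \mu$, and define the coordinate process $Z_t({\bf z}) := {\bf z}_t$ for every ${\bf z} \in \Omega$ and $t \in \Z_-$. By construction $\mathbf{Z}$ is then a stochastic process whose law $\mu_{\mathbf{Z}}$ coincides with $\mu$, and the identity $\|H\|_p^{\mu_{\mathbf{Z}}} = \|H(\mathbf{Z})\|_p$ recorded earlier shows that the hypothesis $H \in L^p((\R^n)^{\Z_-},\mu)$ is \emph{equivalent} to $H(\mathbf{Z}) \in L^p(\Omega,\Fc,\P)$, while the $L^p$-distance on the sequence space agrees with the $L^p$-norm of the associated random variables.

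With this dictionary in place, each bullet follows by applying the corresponding prior result to the coordinate process $\mathbf{Z}$. For the linear reservoir with polynomial readout I would first verify that the moment hypothesis \eqref{eq:mixedExpMoment} of Proposition~\ref{prop:Nonlinear} is exactly the integrability condition imposed in the first bullet: since $Z^{(i)}_{-k}({\bf z}) = z^{(i)}_{-k}$, the quantity $\E[\exp(\alpha \sum_{k=0}^K \sum_{i=1}^n |Z^{(i)}_{-k}|)]$ equals $\int_{(\R^n)^{\Z_-}} \exp(\alpha \sum_{k=0}^K \sum_{i=1}^n |z^{(i)}_{-k}|)\, \mu(\d z)$. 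Hence Proposition~\ref{prop:Nonlinear} produces $N$, $A$, ${\bf c}$ and $h \in \Pol_N$ with $\|H(\mathbf{Z}) - H^{A,{\bf c}}_h(\mathbf{Z})\|_p < \varepsilon$, which under the identification is precisely \eqref{eq:approxPathspace} with $H^{\text{RC}} := H^{A,{\bf c}}_h$, together with the echo state property, causality, time-invariance, and $p$-integrability of $H^{\text{RC}}$.

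For the remaining three families no moment condition is required. The neural-network-readout case is immediate from Corollary~\ref{cor:nn}, whose argument only invokes density of $\Hc_N$ in $L^p$ of an arbitrary finite measure; the trigonometric state-affine case follows from Theorem~\ref{thm:TSAS}, which holds for \emph{any} input process; and the echo state network case follows from Theorem~\ref{thm:ESN} under the stated assumptions that $\activF$ is bounded, continuous, and non-constant. In each instance the invoked theorem supplies a reservoir functional with the echo state property, an associated causal and time-invariant filter, $p$-integrability, and the $\varepsilon$-approximation in the random-input norm, all of which translate verbatim into the asserted statement on $(\R^n)^{\Z_-}$.

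I do not anticipate a genuine analytic obstacle, since all the substantive work was carried out in the four earlier statements. The only point demanding care is the bookkeeping of the identification: checking that the coordinate process indeed reproduces $\mu$ as its law, that $H \in L^p((\R^n)^{\Z_-},\mu)$ corresponds to $H(\mathbf{Z}) \in L^p(\Omega,\Fc,\P)$, and—most delicately—that the exponential-moment condition transfers correctly in the polynomial case. The proof is therefore essentially a translation exercise, made rigorous by the equivalence of the two viewpoints emphasized in the preliminaries.
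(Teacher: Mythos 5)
Your proposal is correct and follows exactly the paper's own argument: realize $\mu$ as the law of the coordinate process on the canonical space $\Omega=(\R^n)^{\Z_-}$ with $\P=\mu$, note that the $L^p$-norms and the exponential-moment condition translate verbatim, and invoke Proposition~\ref{prop:Nonlinear}, Corollary~\ref{cor:nn}, Theorem~\ref{thm:TSAS}, and Theorem~\ref{thm:ESN}. No gaps; the bookkeeping you flag is the whole proof, as in the paper.
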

\begin{proof}
Set $\Omega = (\R^n)^{\Z_-}$, $\Fc= \otimes_{t \in \Z_-} \mathcal{B}(\R^n)$, $\P = \mu$ and define ${\bf Z}_t({\bf z}):={\bf z}_t$ for all ${\bf z} \in \Omega$, $t \in \Z_-$. Then $\Fc = \sigma({\bf Z}_t \, : \, t \in \Z_-)=\Fc_{-\infty}$ and ${\bf Z}$ is the identity mapping on $(\R^n)^{\Z_-}$. One may now apply Proposition~\ref{prop:Nonlinear}, Corollary~\ref{cor:nn}, Theorem~\ref{thm:TSAS} and Theorem~\ref{thm:ESN} with this choice of probability space $(\Omega,\Fc,\P)$ and input process ${\bf Z}$. The statement of Theorem~\ref{thm:sum} then precisely coincides with the statement of Proposition~\ref{prop:Nonlinear}, Corollary~\ref{cor:nn}, Theorem~\ref{thm:TSAS} and Theorem~\ref{thm:ESN}, respectively.
\end{proof}

\subsection{Approximation of stationary strong time series models}

Most parametric time series models commonly used in financial, macroeconometrics, and forecasting applications are specified by relations of the type
\begin{equation}
\label{eq:general form time series}
\mathbf{X} _t= G \left(\mathbf{X} _{t-1}, {\bf Z} _t, \boldsymbol{\theta}\right),
\end{equation}
where $\boldsymbol{\theta} \in \mathbb{R} ^k $ are the parameters of the model and the vector $\mathbf{X} _t \in \mathbb{R}^N $ is built so that it contains in its components the time series of interest and that, at the same time, allows for a Markovian representation of the model as in \eqref{eq:general form time series}. The model is driven by the innovations process ${\bf Z} = ({\bf Z}_t)_{ t \in \Z }\in \left({\Bbb R}^n\right)^{\Bbb Z}$. When the innovations are made out of  independent and identically distributed random variables we say that the model is strong \cite{Francq2010}. It is customary in the time series literature to impose constraints on the parameters vector $\boldsymbol{\theta}  $ so that the relation \eqref{eq:general form time series} has a unique second-order stationary solution or, in the language of this paper, the system   \eqref{eq:general form time series} satisfies the echo state property and the associated filter $U _G: \left({\Bbb R}^n\right)^{\Bbb Z} \rightarrow \left({\Bbb R}^N\right)^{\Bbb Z} $ satisfies that
\begin{equation}
\label{eq:stationary conditions}
\E \left[  U_G({\bf Z})_t \right]=: \boldsymbol{\mu} \mbox{ and }
\E \left[ U_G({\bf Z})_t U_G({\bf Z})_t^{\top} \right]=:\mbox{\boldmath$\Sigma   $} ,\, t \in \mathbb{Z}_{-},
\end{equation}
with $\boldsymbol{\mu} \in \mathbb{R}^N $  and $\mbox{\boldmath$\Sigma   $} \in \mathbb{M}_N  $ constant. The Wold decomposition theorem\cite[Theorem 5.7.1]{BrocDavisYellowBook}  shows that any such filter can be uniquely written as the sum of a linear and a deterministic process.

It is obvious that for strong models the stationarity condition \eqref{eq:mutequalsmuz} holds and that, moreover, the condition \eqref{eq:stationary conditions} implies that
\begin{equation}
\label{eq:norm ug}
\left\|U _G({\bf Z})\right\| _2=\sup_{t \in \Z_-}\left\{\E \left[  |U({\bf Z})_t|^2 \right]^{1/2}\right\}=
{\rm trace} \left(\boldsymbol{\Sigma}\right)^{1/2}< \infty.
\end{equation}
This integrability condition guarantees that the approximation results in Proposition \ref{prop:Nonlinear}, Corollary \ref{cor:nn}, and Theorems \ref{thm:TSAS} and \ref{thm:ESN} hold for second-order stationary strong time series models with $p=2 $. More specifically, the processes determined by this kind of models can be approximated in the $L ^2  $ sense by linear processes with polynomial or neural network readouts (when the condition in Remark \ref{condition iid} is satisfied), by trigonometric state-affine systems with linear readouts, or by echo state networks.

Important families of models to which this approximation statement can be applied are, among many others, (see the references for the meaning of the acronyms) GARCH \cite{engle:arch, bollerslev:garch}, VEC~\cite{bollerslev:vec}, BEKK~\cite{engle:bekk}, CCC~\cite{bollerslev:ccc}, DCC~\cite{tse:dcc, engle:dcc}, GDC~\cite{kroner:ng}, and ARSV~\cite{Taylor1982, harvey:ruiz:shephard}.

\section{Conclusion}

We have shown the universality of three different families of reservoir computers with respect to the $L ^p $ norm associated to any given discrete-time semi-infinite input process.

On the one hand we proved that linear reservoir systems with either neural network or, if the input process satisfies the exponential moments condition \eqref{eq:mixedExpMoment}, polynomial readout maps are universal. 

On the other hand we showed that this hypothesis can be dropped by considering two different reservoir families with linear readouts, namely, trigonometric state-affine systems and echo state networks. The latter are the most widely used reservoir systems in applications. The linearity in the readouts is  a key feature  in supervised machine learning applications of these systems. It guarantees that they can be used in high-dimensional situations and in the presence of large datasets, since the training in that case is reduced to a linear regression. 

We emphasize that, unlike existing results in the literature \cite{rc6, rc7} dealing with uniform universal approximation, the $L ^p $ criteria used in this paper allow to formulate universality statements that do not necessarily  impose almost sure uniform boundedness on the inputs or the fading memory property on the filter that needs to be approximated.

% if have a single appendix:
%\appendix[Proof of the Zonklar Equations]
% or
%\appendix  % for no appendix heading
% do not use \section anymore after \appendix, only \section*
% is possibly needed

% use appendices with more than one appendix
% then use \section to start each appendix
% you must declare a \section before using any
% \subsection or using \label (\appendices by itself
% starts a section numbered zero.)
%

\appendix
\subsection{Auxiliary Lemmas}
\begin{lemma} \label{lem:condExp}
Let ${\bf Z}: \Bbb Z \times \Omega \rightarrow \mathbb{R}^n  $ be a stochastic process and let $\Fc_t := \sigma({\bf Z}_0,\ldots,{\bf Z}_{t})$, $t \in \Z_-$,  and  $\Fc_{-\infty} :=\sigma({\bf Z}_t \colon t \in \Z_-)\} $.
Let $F \in L^p(\Omega,\Fc_{-\infty},\P)$. Then $\E[F|\Fc_t]$ converges to $F$  as $t \to -\infty$, both $\P$-almost surely and in norm $\|\cdot\|_p$, for any $p \in [1,\infty)$. 
%For any $\varepsilon > 0$ there exists $t \in \N$ with $\|G(Z)-\E[G(Z)|\Fc_t]\|_p < \varepsilon$.
\end{lemma}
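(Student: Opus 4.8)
The plan is to identify this as an instance of Lévy's upward martingale convergence theorem, after correctly orienting the filtration. Although the time index $t$ tends to $-\infty$, the $\sigma$-algebras $\Fc_t=\sigma({\bf Z}_0,\ldots,{\bf Z}_t)$ \emph{increase} as $t$ decreases, because smaller $t$ incorporates more of the variables ${\bf Z}_0,{\bf Z}_{-1},{\bf Z}_{-2},\ldots$. Accordingly, I would reindex by setting $\mathcal{G}_n:=\Fc_{-n}$ for $n\in\N$, so that $(\mathcal{G}_n)_{n\in\N}$ is an increasing filtration satisfying $\bigvee_{n\in\N}\mathcal{G}_n=\Fc_{-\infty}$, and define $M_n:=\E[F\mid\mathcal{G}_n]$. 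By the tower property of conditional expectation, $(M_n)_{n\in\N}$ is a martingale with respect to $(\mathcal{G}_n)_{n\in\N}$.

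Next I would invoke the martingale convergence theorem for increasing filtrations (Lévy's upward theorem, see for instance the convergence results in \cite[Chapter~5]{Durrett2010} or \cite{Kallenberg2002}). Since $F\in L^1(\Omega,\Fc_{-\infty},\P)$, this yields that $M_n$ converges both $\P$-almost surely and in $L^1$ to $\E[F\mid\bigvee_{n}\mathcal{G}_n]=\E[F\mid\Fc_{-\infty}]$. Because $F$ is by hypothesis $\Fc_{-\infty}$-measurable, the limit equals $F$, which already settles the almost sure convergence and the $p=1$ case of the norm convergence.

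It remains to upgrade the norm convergence to $L^p$ for $p\in(1,\infty)$. Conditional Jensen shows $\|M_n\|_p\leq\|F\|_p$, so the martingale is bounded in $L^p$, and Doob's $L^p$ maximal inequality gives $\bigl\|\sup_{n}|M_n|\bigr\|_p\leq\tfrac{p}{p-1}\|F\|_p<\infty$. Since $|F|=\lim_n|M_n|\leq\sup_k|M_k|$ almost surely, one has the domination $|M_n-F|^p\leq\bigl(2\sup_k|M_k|\bigr)^p\in L^1$, and the dominated convergence theorem applied to $|M_n-F|^p\to0$ yields $\|M_n-F\|_p\to0$. There is no serious obstacle here: the only points requiring care are getting the direction of the filtration right---so that this is an \emph{upward} rather than a reverse martingale convergence---and separating the $p=1$ argument (which rests on uniform integrability, automatic for conditional expectations of a fixed $L^1$ variable) from the $p>1$ argument (which rests on Doob's inequality and dominated convergence).
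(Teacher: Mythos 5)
Your proof is correct and follows essentially the same route as the paper's: reinterpret $(\Fc_{-n})_{n\in\N}$ as an increasing filtration, apply L\'evy's upward theorem for the almost sure and $L^1$ convergence, and then use the $L^p$-boundedness $\|M_n\|_p\le\|F\|_p$ from conditional Jensen to handle $p>1$. The only cosmetic difference is that the paper cites the $L^p$ martingale convergence theorem (Durrett, Theorem~5.4.5) as a black box at that last step, whereas you unpack its standard proof via Doob's maximal inequality and dominated convergence; both are valid.
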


\begin{proof} Since $\Fc_{-t} \subset \Fc_{-t-1} \subset \Fc_{-\infty}$, for all $t\in \N$, and $F \in L^p(\Omega,\Fc_{-\infty},\P) \subset L^1(\Omega,\Fc_{-\infty},\P)$, one has by L\'evy's Upward Theorem (see, for instance,\ \cite[II.50.3]{Rogers2000} or \cite[Theorem~5.5.7]{Durrett2010}) that $F_t:=\E[F|\Fc_t]$ converges for $t \to -\infty$ to $F$ in $\|\cdot\|_1$ and $\P$-almost surely.\ If $p=1$ this already implies the claim. For $p>1$ one has 
by standard properties of conditional expectations (see, for instance,\ \cite[Theorem~5.1.4]{Durrett2010}) that $\sup_{t \in \N} \E[|F_t|^p] \leq \E[|F|^p]$. Hence \cite[Theorem~5.4.5]{Durrett2010} implies that $F_t$ converges for $t \to -\infty$ to some $\tilde{F} \in L^p(\Omega,\Fc_{-\infty},\P)$ both in $\|\cdot\|_p$ and $\P$-almost surely. But this identifies $\tilde{F}=\lim_{t \to -\infty} F_t = F$, $\P$-almost surely and hence $F_t$ converges for $t \to -\infty$ to $F$ also in $\|\cdot\|_p$.
\end{proof}

\begin{lemma}\label{lem:matrixLem}
For $N \in \N \setminus \{0,1\}$ and $j=1,\ldots,N-1$ define $A_{j} \in \M_{N}$ by $(A_{j})_{k,l}= \delta_{k,j+1} \delta_{l,j}$ for  $k,l \in \{ 1,\ldots,N\}$. Then
for $L \in \N$, $j_0,\ldots, j_L \in \{ 1,\ldots, N-1\}$ it holds that 
\begin{equation} \label{eq:auxEq5} (A_{j_L} \cdots A_{j_0})_{k,l} = \delta_{k,j_L + 1} \delta_{l,j_0} \prod_{i=1}^L \delta_{j_i,j_{i-1}+1}. \end{equation}
In particular $A_{j_L} \cdots A_{j_0} \neq 0$ if and only if $j_i=j_0+i$ for $i\in \{1,\ldots, L\}$.
\end{lemma}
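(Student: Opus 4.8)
The plan is to prove the entrywise formula \eqref{eq:auxEq5} by exploiting the rank-one structure of each $A_j$ and then to read off the ``in particular'' statement as an immediate corollary. The key observation is that $A_j = e_{j+1} e_j^\top$, the outer product of the standard basis vectors $e_{j+1}, e_j \in \R^N$, since $(e_{j+1}e_j^\top)_{k,l} = \delta_{k,j+1}\delta_{l,j}$ matches the defining formula for $A_j$ verbatim. This reduces the whole computation to a telescoping of inner products of basis vectors and avoids any index juggling inside matrix-product sums.

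First I would write the product in factored form as
\[ A_{j_L}\cdots A_{j_0} = \left(e_{j_L+1} e_{j_L}^\top\right)\left(e_{j_{L-1}+1}e_{j_{L-1}}^\top\right)\cdots\left(e_{j_0+1}e_{j_0}^\top\right) \]
and observe that each adjacent pair of factors contributes the scalar $e_{j_i}^\top e_{j_{i-1}+1} = \delta_{j_i, j_{i-1}+1}$, for $i=1,\ldots,L$. Collecting these $L$ scalars, the product collapses to
\[ A_{j_L}\cdots A_{j_0} = \left(\prod_{i=1}^L \delta_{j_i, j_{i-1}+1}\right) e_{j_L+1} e_{j_0}^\top, \]
and since $(e_{j_L+1}e_{j_0}^\top)_{k,l} = \delta_{k,j_L+1}\delta_{l,j_0}$, this is exactly \eqref{eq:auxEq5}. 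A reader who prefers it can reach the same identity by induction on $L$: the base case $L=0$ is the definition of $A_{j_0}$ (the empty product of deltas being $1$), and the inductive step peels off the leftmost factor through $(A_{j_L} B)_{k,l} = \delta_{k,j_L+1} B_{j_L,l}$, substituting the induction hypothesis for $B = A_{j_{L-1}}\cdots A_{j_0}$.

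For the ``in particular'' claim I would note that the scalar $\prod_{i=1}^L \delta_{j_i,j_{i-1}+1}$ equals $1$ precisely when $j_i = j_{i-1}+1$ for every $i \in \{1,\ldots,L\}$ and vanishes otherwise; telescoping this recursion yields the equivalent closed form $j_i = j_0 + i$. When the condition holds the product equals $e_{j_L+1}e_{j_0}^\top$, which is genuinely nonzero because $j_L + 1 \le N$ (as $j_L \le N-1$), so the index $(j_L+1, j_0)$ is valid and the matrix carries a $1$ there. There is no real obstacle in this lemma; the only steps demanding care are the index bookkeeping in the telescoping and the verification that the surviving rank-one matrix is truly nonzero, which is precisely where the hypothesis $j \le N-1$ (rather than $j \le N$) enters.
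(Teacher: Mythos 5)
Your proof is correct, and your primary route differs from the paper's. The paper proves \eqref{eq:auxEq5} by induction on $L$, expanding the leftmost factor of the matrix product as a sum over the intermediate index $r$ and letting the Kronecker deltas collapse that sum; it is exactly the inductive argument you sketch as your fallback. Your main argument instead identifies each $A_j$ as the rank-one outer product $e_{j+1}e_j^\top$ and telescopes the inner products $e_{j_i}^\top e_{j_{i-1}+1} = \delta_{j_i,j_{i-1}+1}$, which yields the closed form $\bigl(\prod_{i=1}^L \delta_{j_i,j_{i-1}+1}\bigr)\, e_{j_L+1}e_{j_0}^\top$ in one line without any explicit summation. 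The outer-product view is more conceptual and makes the ``in particular'' statement transparent (including your worthwhile observation, which the paper leaves implicit, that the surviving matrix $e_{j_L+1}e_{j_0}^\top$ is genuinely nonzero precisely because $j_L \le N-1$ keeps the row index $j_L+1$ in range); the paper's induction is more elementary and self-contained, requiring nothing beyond the definition of matrix multiplication. Both are complete and essentially equally short.
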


\begin{proof}
The last statement directly follows from \eqref{eq:auxEq5}. To prove \eqref{eq:auxEq5} we proceed by induction on $L$.
Indeed, for $L=0$ the formula \eqref{eq:auxEq5} is just the definition of $A_{j_0}$. For the induction step, one assumes that \eqref{eq:auxEq5} holds for $L-1$ and calculates
\[\begin{aligned} (A_{j_L} & \cdots A_{j_0})_{k,l} \\ & = \sum_{r=1}^{N} \delta_{k,j_L+1} \delta_{r,j_L} (A_{j_{L-1}} \cdots A_{j_0})_{r,l} \\ & = \sum_{r=1}^{N} \delta_{k,j_L+1} \delta_{r,j_L} \delta_{r,j_{L-1} + 1} \delta_{l,j_0} \prod_{i=1}^{L-1} \delta_{j_i,j_{i-1}+1},  \end{aligned}  \]
which is indeed \eqref{eq:auxEq5}.  
\end{proof}

%\subsection{Proof of Something else}
%Appendix two text goes here.

\section*{Acknowledgment}

The authors thank Lyudmila Grigoryeva and Josef Teichmann for helpful discussions and remarks and acknowledge partial financial support  coming from the Research Commission of the Universit\"at Sankt Gallen, the Swiss National Science Foundation (grants number 175801/1 and 179114), and the French ANR ``BIPHOPROC" project (ANR-14-OHRI-0018-02).

% Can use something like this to put references on a page
% by themselves when using endfloat and the captionsoff option.
\ifCLASSOPTIONcaptionsoff
  \newpage
\fi

% trigger a \newpage just before the given reference
% number - used to balance the columns on the last page
% adjust value as needed - may need to be readjusted if
% the document is modified later
%\IEEEtriggeratref{8}
% The "triggered" command can be changed if desired:
%\IEEEtriggercmd{\enlargethispage{-5in}}

% references section

\bibliographystyle{IEEEtran}
\bibliography{/Users/JPO/Dropbox/Public/Gonon_Ortega}
%\bibliography{Gonon_Ortega}

%%%%%%%%%%%%%%%%%%%%%%%%%%%%%%%%%%%%%%%%%%%%%%%%%%%%%%%%%%%%%%%%%%%%%%%%%%%%%%%
%%%%%%%%%%%%%%%%%%%%%%%%%%%%%%%%%%%%%%%%%%%%%%%%%%%%%%%%%%%%%%%%%%%%%%%%%%%%%%%%

% biography section
% 
% If you have an EPS/PDF photo (graphicx package needed) extra braces are
% needed around the contents of the optional argument to biography to prevent
% the LaTeX parser from getting confused when it sees the complicated
% \includegraphics command within an optional argument. (You could create
% your own custom macro containing the \includegraphics command to make things
% simpler here.)
%\begin{IEEEbiography}[{\includegraphics[width=1in,height=1.25in,clip,keepaspectratio]{mshell}}]{Michael Shell}
% or if you just want to reserve a space for a photo:

%\begin{IEEEbiography}{Lukas Gonon}
%Biography text here.
%\end{IEEEbiography}

% if you will not have a photo at all:
%\begin{IEEEbiography}{Juan-Pablo Ortega}
%Biography text here.
%\end{IEEEbiography}

% insert where needed to balance the two columns on the last page with
% biographies
%\newpage

% You can push biographies down or up by placing
% a \vfill before or after them. The appropriate
% use of \vfill depends on what kind of text is
% on the last page and whether or not the columns
% are being equalized.

%\vfill

% Can be used to pull up biographies so that the bottom of the last one
% is flush with the other column.
%\enlargethispage{-5in}

\end{document}